\newtheorem{thm}{Theorem}
\newtheorem{definition}{Definition}
\newtheorem{assumption}{Assumption}
\newtheorem{lemma}{Lemma}
\newtheorem{problem}{Problem}
\newtheorem{remark}{Remark}
\journal{Journal of the Franklin Institute}
\begin{document}

\begin{frontmatter}

%% Title, authors and addresses

%% use the tnoteref command within \title for footnotes;
%% use the tnotetext command for theassociated footnote;
%% use the fnref command within \author or \affiliation for footnotes;
%% use the fntext command for theassociated footnote;
%% use the corref command within \author for corresponding author footnotes;
%% use the cortext command for theassociated footnote;
%% use the ead command for the email address,
%% and the form \ead[url] for the home page:
%% \title{Title\tnoteref{label1}}
%% \tnotetext[label1]{}
%% \author{Name\corref{cor1}\fnref{label2}}
%% \ead{email address}
%% \ead[url]{home page}
%% \fntext[label2]{}
%% \cortext[cor1]{}
%% \affiliation{organization={},
%%             addressline={},
%%             city={},
%%             postcode={},
%%             state={},
%%             country={}}
%% \fntext[label3]{}

\title{HJB-based online safety-embedded critic learning for uncertain systems with self-triggered mechanism}

\author[label1]{Zhanglin Shangguan, Bo Yang, Qi Li}
\author[label2]{Wei Xiao}
\author[label1]{Xingping Guan}
\affiliation[label1]{organization={School of Automation and Intelligent Sensing, Shanghai Jiao Tong University},
            % addressline={},
            city={Shanghai},
            postcode={200240},
            % state={},
            country={China}}

\affiliation[label2]{organization={Computer Science and Artificial Intelligence Laboratory, Massachusetts Institute of Technology},
            % addressline={},
            city={Cambridge},
            postcode={02139},
            state={Massachusetts},
            country={USA}}

% \author{Zhanglin Shangguan, Bo Yang, Qi Li, Zhaojian Wang, Xingping Guan} %% Author name

%% Author affiliation
% \affiliation{organization={Department of Automation, Shanghai Jiao Tong University},%Department and Organization
%             % addressline={}, 
%             city={Shanghai},
%             postcode={200240}, 
%             % state={},
%             country={China}}

%% Abstract
\begin{abstract}
This paper presents a learning-based optimal control framework for safety-critical systems with parametric uncertainties, addressing both time-triggered and self-triggered controller implementations. First, we develop a robust control barrier function (RCBF) incorporating Lyapunov-based compensation terms to rigorously guarantee safety despite parametric uncertainties. Building on this safety guarantee, we formulate the constrained optimal control problem as the minimization of a novel safety-embedded value function, where the RCBF is involved via a Lagrange multiplier that adaptively balances safety constraints against optimal stabilization objectives. To enhance computational efficiency, we propose a self-triggered implementation mechanism that reduces control updates while maintaining dual stability-safety guarantees. The resulting self-triggered constrained Hamilton-Jacobi-Bellman (HJB) equation is solved through an online safety-embedded critic learning framework, with the Lagrange multiplier computed in real time to ensure safety. Numerical simulations demonstrate the effectiveness of the proposed approach in achieving both safety and control performance.
\end{abstract}

%% Keywords
\begin{keyword}
Learning-based optimal control, robust control barrier function, online safety-embedded critic learning, self-triggered mechanism.
\end{keyword}

\end{frontmatter}

%% Add \usepackage{lineno} before \begin{document} and uncomment 
%% following line to enable line numbers
%% \linenumbers

%% main text
%%

%% Use \section commands to start a section
\section{Introduction}
\label{sec1}
Modern control systems face significant challenges in safety-critical applications like robotics, aerospace, and power systems, where nonlinear dynamics, state constraints, and model uncertainties demand advanced control strategies \cite{hsu2023safety}. Traditional methods often struggle to ensure both performance and stability in such complex systems, while practical implementations must additionally contend with stringent computational and communication constraints. These challenges underscore the need for novel control frameworks that simultaneously address performance optimization, safety guarantee, and resource-aware operation in uncertain environments.

In recent years, learning-based optimal control methods have emerged as promising solutions for control performance optimization, particularly within the framework of adaptive dynamic programming (ADP) \cite{lewis2013reinforcement}. ADP has demonstrated significant potential in stabilizing nonlinear systems by addressing the Hamilton-Jacobi-Bellman (HJB) equation. Central to ADP is the actor-critic architecture \cite{vamvoudakis2010online}, where the actor implements a control policy and the critic evaluates the associated costs, providing feedback through reward or punishment signals. However, this dual architecture appears redundant, as the control policy can be derived directly from the value function. As a result, a streamlined approach using an only-critic framework has been proposed, improving the efficiency of the learning process \cite{wang2017adaptive,yang2020event}. This only-critic framework has been further extended to the identifier-critic (IC) framework, enabling online learning for systems with parametric uncertainties \cite{na2020adaptive}. However, the aforementioned IC learning-based optimal controllers are typically designed without accounting for state constraints, limiting their applicability in safety-critical systems. To address the challenges posed by safety constraints, recent advances in ADP-based optimal control have prioritized the strict enforcement of hard state constraints—both during learning and deployment.

To ensure that system states are constrained in a user-defined safe set, state constraints are considered in ADP techniques by incorporating reciprocal barrier functions in value functions \cite{cohen2020approximate,greene2020sparse}. However, this method requires the assumption that the augmented value function maintains Lipschitz continuity, which may fail to hold for certain systems and safety constraints. Moreover, control performance learning and safety guarantees are coupled without rigorous safety guarantees. To strictly enforce safety and decouple safety from learning, control barrier functions (CBFs) \cite{ames2016control} are utilized to modify the nominal optimal controller \cite{cheng2019end,peng2023design}. While safety can be strictly guaranteed through CBF-based safety filters, these safe controllers only achieve point-wise optimality, and the inverse optimal properties of CBF-based safety filters are detailed in the work \cite{krstic2023inverse}. To overcome the myopia of CBF-based safety filters, \cite{cohen2023safe} proposes a safe exploration technique that hybridizes the CBF-filtered safe control policy and the optimal stabilization-oriented control policy to learn the optimal value function. This approach embeds safety constraint information into the critic learning process of the optimal value function. CBF-based safety constraints are synthesized in HJB equations, which are solved offline to achieve optimal stabilization and safety objectives simultaneously \cite{almubarak2021hjb}. We observe that the works \cite{cohen2023safe,wang2025learning} employ constant weighting factors to balance controller safety and optimal stabilization. In contrast, our approach, similar to \cite{bandyopadhyay2023hjb}, embeds CBF-based safety constraints into the value function through adaptive Lagrange multipliers. Our work further advances this framework by: incorporating robust safety constraint formulation under parametric uncertainties and decoupling safety guarantees from control performance learning.

In systems with limited computational and communication resources, ADP-based optimal controllers are often implemented with event-triggered mechanisms \cite{vamvoudakis2014event,vamvoudakis2018model}. However, the event-triggered mechanism designed in \cite{vamvoudakis2014event,vamvoudakis2018model} only considers the stability of the system, and the CBF-based safeguard controller may lose safety guarantees during the triggering intervals. With a periodic sampling mechanism, baseline CBFs are enhanced with compensation terms related to the sampling period to guarantee the safety properties of systems using periodic sampling controllers \cite{breeden2021control,sun2024safety}. The introduction of the compensation term makes the CBF more conservative. In fact, when safety constraints are inactive, the system inherently possesses a certain ``safety margin", allowing an event-triggered CBF calculation mechanism without requiring the compensation term by leveraging this ``safety margin" \cite{yang2019self}. The work \cite{xiao2022event} further extends this framework to systems with unknown dynamics, where event-triggered CBFs are applied to auxiliary adaptive dynamics. The work \cite{sabouni2024optimal} investigates the application of event-triggered CBFs in the safety control of connected automated vehicles, effectively alleviating computational and communication burdens. However, event-triggered control performance and safety guarantee in \cite{yang2019self,xiao2022event,sabouni2024optimal} are synthesized in a point-wise optimization framework. Aligned with \cite{xiao2022event}, our research interest lies in further extending this methodology to an infinite-horizon optimal control framework.

% In this paper, we propose an online safety-embedded learning-based optimal control for uncertain systems with a self-triggered mechanism. By employing a low-pass filter-based identifier, we construct a robust CBF (RCBF)-based safety constraint, which can avoid potential safety violations caused by estimation errors. Subsequently, RCBF-based safety constraints are synthesized in an infinite-horizon optimal control framework, leading to safety-embedded value functions. Optimal stabilization and safety are balanced by adaptive Langrange multipliers. Finally, we propose a self-triggered mechanism for implementing the approximate optimal safe controller. This mechanism is determined using current states and updates the control policy only when the triggering rule is violated, thereby conserving communication and computation resources. Compared with the previous literature, the contributions of this paper are threefold. 
% In this paper, we propose an online safety-embedded critic learning-based optimal control for uncertain systems with a self-triggered mechanism. 
Compared with the previous literature, the contributions of this paper are threefold. 
\begin{enumerate}
\item We establish a theoretical framework for safety-critical optimal control by deriving a HJB equation that explicitly incorporates safety constraints through robust CBFs (RCBFs). Building on input-to-state safe (ISSf)-CBF \cite{kolathaya2018input}, our approach uniquely integrates parameter estimation dynamics: if an identifier guarantees convergence, we design precise compensation terms to maintain safety under model uncertainties. Furthermore, by leveraging Karush-Kuhn-Tucker (KKT) conditions, we embed RCBF constraints into HJB equations of infinite-horizon optimal control problems, formally unifying control performance and safety guarantees.
\item We develop a novel self-triggered mechanism to efficiently implement the RCBF-constrained optimal controller with dual stability-safety guarantees. Our approach shares a similar design philosophy with \cite{yang2019self,xiao2022event} while extending the framework to infinite-horizon optimal control. By introducing the concept of ``safety margin", we establish a dual-threshold triggering mechanism that automatically shifts between stability-dominant triggering (during inactive constraint periods) and safety-critical triggering (near constraint boundaries). This formulation effectively prevents ``Zeno behavior" and reduces the computation and communication overhead. 
\item We propose an online safety-embedded critic learning framework to approximate solutions to constrained HJB equation. Key innovations include: (i) a parameter identifier based on filtered auxiliary integral variables is utilized, where a ``refresh" mechanism is introduced to enforce convergence of the estimated parameters to their true values, enabling accurate RCBF compensation; and (ii) unlike work \cite{cohen2023safe,wang2025learning} that uses fixed safety-stability trade-off gains, we introduce a real-time calculated Lagrange multiplier to adaptively balance optimal stabilization and safety during critic weight learning.
\end{enumerate}

\textbf{Notations.} A continuous function $\alpha:\mathbb{R}_{\ge 0}\mapsto\mathbb{R}_{\ge 0}$ is classified as a class $\mathcal{K}_{\infty}$ function provided that it is strictly increasing and $\alpha(0)=0,\ \lim_{r\mapsto\infty}\alpha(r)=\infty$. The symbol $\Vert \cdot \Vert$ signifies the standard Euclidean norm. For a compact set $\mathcal{X}\subset\mathbb{R}^n$ and a continuous function $(\cdot):\mathcal{X}\mapsto \mathbb{R}^N$, let $\overline{\Vert (\cdot) \Vert}$ be defined as $\sup_{x\in\mathcal{X}} \Vert(\cdot)\Vert$. The notation $\lambda_{\max}(M)$, $\lambda_{\min}(M)$ represents the maximum and minimum eigenvalues of matrix $M$, respectively. $\mathbb{I}_n \in \mathbb{R}^{n \times n}$ stands for an $n \times n$ identity matrix. A closed ball centered at $x_0\in\mathbb{R}^n$ with radius $r>0$ is denoted by $\mathcal{B}_r(x_0)$, which is defined as $\{x\in\mathbb{R}^n | \Vert x-x_0 \Vert\leq r \}$. 

\section{Preliminaries and problem formulations}
We investigate a class of uncertain nonlinear systems governed by:
\begin{flalign}\label{system}
& \dot{x}(t) = \zeta(x,\theta) + \rho(x)u, 
\end{flalign} 
where $x\in\mathbb{R}^n$ denotes the state vector, $\theta\in\Theta\subset\mathbb{R}^p$ represents parametric uncertainties, and $u\in\mathbb{R}^m$ is the control input vector. The functions $\zeta:\mathbb{R}^n \times \mathbb{R}^p \mapsto \mathbb{R}^n$, and $\rho:\mathbb{R}^n \to \mathbb{R}^{n\times m}$, characterize the nonlinear drift dynamics and the input influence matrix, respectively. We consider the case in which the unknown drift dynamic $\zeta(x,\theta)$ can be expressed as a linear combination of a known regressor matrix $\omega:\mathbb{R}^n\to\mathbb{R}^{n\times p}$ and an unknown parameter vector $\theta$, which is $\zeta(x,\theta)=\omega(x)\theta$. To ensure well-posedness of solutions, we require the designed control policy $u(t)$ to render the system dynamics, $(\zeta,\varrho)$, locally Lipschitz continuous in $x$ and piecewise continuous in $t$. Under these conditions, the trajectory of the system \eqref{system} is uniquely determined for any initial state $x(t_0)=x_0$. 
 
\subsection{Safety guarantees via control barrier functions}
System safety is quantified by the state trajectory remaining within a prescribed safe operating region defined by: $\mathcal{D} = \{ x\in\mathcal{X} : s(x)\ge 0\}$, where $s:\mathcal{X}\mapsto\mathbb{R}$ is a continuously differentiable safety certification function. A valid safe controller must ensure the forward invariance property: $x(0)\in \mathcal{D}\Longrightarrow x(t)\in\mathcal{D},\ \forall t\ge t_0$.

A highly effective method for designing controllers that make the set $\mathcal{D}$ forward invariant is the control barrier functions (CBFs):
\begin{definition}
\cite{ames2016control} For system \eqref{system} and user-defined safe set $\mathcal{D}$, a continuously differentiable function $s:\mathcal{X}\to\mathbb{R}$ qualifies as a CBF if it satisfies: 
\begin{flalign}\label{3}
& \sup_{u\in\mathcal{U}}\{\mathcal{L}_{\zeta} s(x) + \mathcal{L}_{\rho} s(x)u \}\ge -\alpha(s(x)) 
\end{flalign}
for all $x\in\mathcal{X}$, where $\alpha$ belongs to the class of extended $\mathcal{K}_{\infty}$ functions, $\mathcal{L}_{\zeta}$ and $\mathcal{L}_{\rho}$ represent Lie derivatives along the system dynamics.
\end{definition}

When complete system knowledge is available, the CBF condition generates a family of safe control laws by imposing instantaneous constraints on admissible control inputs. This methodology provides a systematic approach to safety-critical controller synthesis for general nonlinear systems.

\subsection{Constrained optimal control problem}
With the user-defined safe set $\mathcal{D}$, we focus on the following infinite-horizon constrained optimal control problem for the uncertain system \eqref{system}:
\begin{problem} Constrained optimal control problem (COCP)
\begin{subequations}
\begin{flalign}
\min_{u\in\mathcal{U}} \mathcal{J}(x,u)&=\int_{t_0}^{\infty} \underbrace{\left(x(\tau)^{\top}Qx(\tau) + \frac{1}{2}u(\tau)^{\top}Ru(\tau) \right)}_{l(x(\tau),u(\tau))} d\tau \label{value function}\\
 {\rm s.t.}\quad \dot{x} &= \zeta(x,\theta) + \rho(x)u, \\
x&(t)  \in \mathcal{D}\quad \forall t\ge t_0,
\end{flalign}
\end{subequations}
where $Q\in\mathbb{R}^{n\times n}$ is a state weighting matrix $(Q=Q^\top \succ 0)$, $R\in\mathbb{R}^{m\times m}$ is an input weighting matrix $(R=R^\top\succ 0)$, $\mathcal{J}(x,u)$ is the infinite-horizon performance metric, and $l(x,u)$ is the instantaneous cost function.
\end{problem}\label{cocp}

The COCP above presents four key challenges for controller design: (a) Ensuring robust safety under system dynamics with model uncertainties, where baseline CBF methods \eqref{3} may lose safety guarantee; (b) Achieving optimal stabilization while maintaining safety for nonlinear systems, requiring careful trade-off analysis; (c) Developing event-triggered execution mechanisms to alleviate computational and communication burdens in resource-constrained environments; (d) Designing online learning frameworks capable of simultaneously estimating model uncertainties and approximating the solution controller for the COCP.

\section{RCBF-constrained HJB equations}\label{sec3}
In this section, we address the first two challenges posed by the aforementioned control objectives: By introducing a compensation term into the baseline CBF, we design an RCBF to avoid potential safety violations caused by model uncertainties. By incorporating Lagrange multipliers for the RCBF constraints, we bridge safety and optimal stabilization, deriving the RCBF-constrained HJB equation, that is, the condition that the RCBF-induced optimal safe controller must satisfy.

\subsection{Encoding safety via RCBFs}
Model uncertainties can lead to failure of the baseline CBF-based safety constraint \eqref{3}. To ensure the safety of systems under model uncertainties, we concentrate on the design of RCBF-based safety constraints. We denote $\theta$ = $\hat{\theta}$ + $\tilde{\theta}$, where $\hat{\theta}$ is an estimate of uncertain parameters and $\tilde{\theta}$ is the error between the estimate value and the true value. We assume that $\Vert \tilde{\theta} \Vert_{\infty} \leq \overline{\theta}$. Similarly to \cite{kolathaya2018input}, treating $\tilde{\theta}$ as a bounded external disturbance signal, we can define that: for the controller $u$ and $\alpha_1,\alpha_2\in\mathcal{K}_{\infty}$, if $s$ satisfies $\mathcal{L}_\omega s(x)\hat{\theta} + \mathcal{L}_{\rho} s(x) u(x) \ge -\alpha_1 \big(s(x)\big) - \alpha_2 \big( \Vert \tilde{\theta} \Vert_{\infty} \big)$, then $s(x)$ is an input-to-state safe barrier function (ISSf-BF). The existence of an ISSf-BF implies that system \eqref{system} is ISSf with respect to the safe set $\mathcal{D}$, meaning that there exists some $\alpha_3\in\mathcal{K}_{\infty}$ such that the set $\overline{\mathcal{D}} = \left\{x: s(x) + \alpha_3\big( \Vert \tilde{\theta} \Vert_{\infty} \big)\ge 0 \right\}$ is forward invariant. The ISSf property quantifies the impact of external disturbances on system safety. From a controller design perspective, a safety compensation term related to the ISSf gains $\alpha_2,\alpha_3$ should be incorporated into the controller to avoid potential safety violations caused by $\tilde{\theta}$.
\begin{lemma}
\cite{kolathaya2018input} Augment the baseline CBF in \eqref{3} with a compensation term $\Xi(x)$, yielding the following RCBF-based safety constraint:
\begin{flalign}\label{rcbf}
\mathcal{L}_\omega s(x)\hat{\theta} + \mathcal{L}_{\rho} s(x) u(x) \ge -\alpha \big(s(x)\big) + \Xi(x),
\end{flalign}
where $\Xi(x) = \varpi\Vert \mathcal{L}_\omega s(x) \Vert^2$ with $\varpi > 0$ being a user-defined constant. Any controller $u$ satisfying \eqref{rcbf} guarantees the ISSf of the safe set $\mathcal{D}$ w.r.t. $\tilde{\theta}$.
\end{lemma}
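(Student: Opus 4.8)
The plan is to reduce the claim to the ISSf characterization recalled just above: it suffices to show that, along the closed-loop trajectories of \eqref{system}, the safety function $s$ satisfies a differential inequality of ISSf-BF type, $\dot s(x)\ge -\alpha(s(x)) - \alpha_2(\Vert\tilde\theta\Vert_\infty)$ with $\alpha_2\in\mathcal{K}_\infty$. Once this holds, forward invariance of $\overline{\mathcal{D}}=\{x:s(x)+\alpha_3(\Vert\tilde\theta\Vert_\infty)\ge 0\}$ for some $\alpha_3\in\mathcal{K}_\infty$ --- i.e. ISSf of $\mathcal{D}$ with respect to $\tilde\theta$ --- is exactly the implication imported from \cite{kolathaya2018input} via a comparison-lemma argument, so no further work is needed beyond establishing the inequality.

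To obtain it, I would differentiate $s$ along \eqref{system}, substitute $\zeta(x,\theta)=\omega(x)\theta$ and split $\theta=\hat\theta+\tilde\theta$, so that $\dot s(x)=\big(\mathcal{L}_\omega s(x)\hat\theta+\mathcal{L}_\rho s(x)u(x)\big)+\mathcal{L}_\omega s(x)\tilde\theta$. The bracketed term is lower-bounded by $-\alpha(s(x))+\varpi\Vert\mathcal{L}_\omega s(x)\Vert^2$ precisely because $u$ satisfies the RCBF constraint \eqref{rcbf}. The decisive step is then to let the compensation $\Xi(x)=\varpi\Vert\mathcal{L}_\omega s(x)\Vert^2$ dominate the disturbance cross term: Cauchy--Schwarz gives $\mathcal{L}_\omega s(x)\tilde\theta\ge -\Vert\mathcal{L}_\omega s(x)\Vert\,\Vert\tilde\theta\Vert$, and completing the square in the nonnegative scalar $\Vert\mathcal{L}_\omega s(x)\Vert$ yields $\varpi\Vert\mathcal{L}_\omega s(x)\Vert^2-\Vert\mathcal{L}_\omega s(x)\Vert\,\Vert\tilde\theta\Vert\ge -\Vert\tilde\theta\Vert^2/(4\varpi)$. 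Using $\Vert\tilde\theta\Vert\le\sqrt{p}\,\Vert\tilde\theta\Vert_\infty$ then gives
\[
\dot s(x)\ \ge\ -\alpha(s(x)) - \alpha_2\big(\Vert\tilde\theta\Vert_\infty\big),\qquad \alpha_2(r):=\frac{p}{4\varpi}\,r^2\in\mathcal{K}_\infty ,
\]
which is the sought ISSf-BF inequality with $\alpha_1=\alpha$.

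The main obstacle is conceptual rather than computational: one must take the compensation quadratic in $\Vert\mathcal{L}_\omega s(x)\Vert$ so that the resulting gain $\alpha_2$ is a genuine $\mathcal{K}_\infty$ function of $\Vert\tilde\theta\Vert_\infty$ alone, independent of $\overline{\Vert\mathcal{L}_\omega s\Vert}$ over $\mathcal{X}$; a penalty linear in $\Vert\mathcal{L}_\omega s(x)\Vert$ would not achieve this. Everything else --- differentiating $s$, invoking \eqref{rcbf}, and the final appeal to the ISSf-BF $\Rightarrow$ forward-invariance implication of \cite{kolathaya2018input} --- is routine bookkeeping.
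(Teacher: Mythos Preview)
Your proposal is correct and follows the standard ISSf-BF argument: differentiate $s$, split $\theta=\hat\theta+\tilde\theta$, invoke \eqref{rcbf}, and complete the square so that $\varpi\Vert\mathcal{L}_\omega s\Vert^2$ absorbs the cross term at the cost of $\Vert\tilde\theta\Vert^2/(4\varpi)$. The paper does not supply its own proof of this lemma---it is imported from \cite{kolathaya2018input}---but the identical completion-of-the-square device (yielding $(\sqrt{\varpi}\Vert\mathcal{L}_\omega s\Vert-\Vert\tilde\theta\Vert/(2\sqrt{\varpi}))^2-\Vert\tilde\theta\Vert^2/(4\varpi)$) is exactly what the paper uses in its proof of Theorem~\ref{th-cbf1}, so your argument is fully aligned with the paper's methodology.
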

The above analysis treats $\tilde{\theta}$ merely as a generic external disturbance. In practice, if an identifier is available to adaptively estimate the uncertain parameter $\theta$, we can design a more precise compensation term $\Xi(x)$ - that is, a more accurate $\varpi$. Assume that there exists an exponentially convergent identifier with a Lyapunov function $V_{\theta}(\tilde{\theta})$ satisfying:
\begin{flalign}\label{theta lyapunov}
k_1 \Vert \tilde{\theta} \Vert^2 \leq V_{\theta}(\tilde{\theta},t) \leq k_2 \Vert \tilde{\theta} \Vert^2, \quad
\dot{V}_{\theta}(\tilde{\theta},t) \leq -k_3 V_{\theta}(\tilde{\theta},t),
\end{flalign}
where $k_1, k_2, k_3 > 0$. 

Motivated by \cite{taylor2020adaptive}, we consider a robust safe set augmented with this Lyapunov function $V_{\theta}$: $\mathcal{D}_{\theta}\triangleq\{x\in\mathcal{X}:s_{\theta}(x,t)\ge0\}$, where $s_{\theta}(x,t)\triangleq s(x) - \eta V_{\theta}$ with $\eta>0$ being a user-defined constant. It is evident that the safe set $\mathcal{D}_{\theta}$ is more conservative than the original safe set $\mathcal{S}$. Furthermore, as time approaches infinity, $\mathcal{S}_{\theta}$ asymptotically converges to $\mathcal{S}$.
\begin{thm}\label{th-cbf1}
Suppose that $s_{\theta}(x(0),0)>0$ and $\frac{\partial s}{\partial x}\neq 0$ for all $x\in\mathcal{D}$. Given an identifier with a Lyapunov function $V_{\theta}$ characterizing the convergence of the estimation error $\tilde{\theta}$, we can derive a precisely compensated RCBF \eqref{rcbf} by selecting: (i) the scaling factor $\varpi = \frac{1}{4(\eta-\eta_c)k_1k_3}$, and (ii) the class $\mathcal{K}_{\infty}$ function $\alpha(s) = \frac{\eta_c k_3}{\eta}s$, where the design parameters satisfy $\eta > \eta_c > 0$.
\end{thm}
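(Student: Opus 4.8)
The goal is to certify that, under the stated choices of $\varpi$ and $\alpha$, every admissible controller $u$ obeying the RCBF inequality \eqref{rcbf} renders the augmented safe set $\mathcal{D}_\theta$ forward invariant; since $\mathcal{D}_\theta\subseteq\mathcal{D}$, this keeps $x(t)\in\mathcal{D}$ for all $t\ge 0$. The plan is to evaluate $\dot s_\theta$ along the closed loop and show it obeys the differential inequality $\dot s_\theta \ge -\frac{\eta_c k_3}{\eta}\,s_\theta$. Since $r\mapsto \frac{\eta_c k_3}{\eta}r$ is a class $\mathcal{K}_\infty$ function and $s_\theta(x(0),0)>0$, the comparison lemma then gives $s_\theta(x(t),t)\ge s_\theta(x(0),0)\,e^{-\eta_c k_3 t/\eta}>0$ for all $t\ge 0$, which is exactly forward invariance of $\mathcal{D}_\theta$.

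First I would differentiate $s_\theta(x,t)=s(x)-\eta V_\theta$ along \eqref{system}, splitting $\theta=\hat\theta+\tilde\theta$:
\[
\dot s_\theta = \mathcal{L}_\omega s(x)\hat\theta + \mathcal{L}_{\rho} s(x)u + \mathcal{L}_\omega s(x)\tilde\theta - \eta\dot V_\theta .
\]
The RCBF inequality \eqref{rcbf} lower-bounds the first two terms by $-\alpha(s(x))+\varpi\Vert\mathcal{L}_\omega s(x)\Vert^2$. For the indefinite cross term I would use Young's inequality, $\mathcal{L}_\omega s(x)\tilde\theta \ge -\Vert\mathcal{L}_\omega s(x)\Vert\,\Vert\tilde\theta\Vert \ge -\varpi\Vert\mathcal{L}_\omega s(x)\Vert^2-\frac{1}{4\varpi}\Vert\tilde\theta\Vert^2$, calibrated so that the compensation term $\Xi(x)=\varpi\Vert\mathcal{L}_\omega s(x)\Vert^2$ absorbs the uncertain direction exactly, leaving
\[
\dot s_\theta \ge -\alpha(s(x)) - \frac{1}{4\varpi}\Vert\tilde\theta\Vert^2 - \eta\dot V_\theta .
\]

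Next I would invoke the identifier estimates \eqref{theta lyapunov}: from $V_\theta\ge k_1\Vert\tilde\theta\Vert^2$ one gets $-\frac{1}{4\varpi}\Vert\tilde\theta\Vert^2\ge -\frac{1}{4\varpi k_1}V_\theta$, and from $\dot V_\theta\le -k_3 V_\theta$ one gets $-\eta\dot V_\theta\ge \eta k_3 V_\theta$, hence
\[
\dot s_\theta \ge -\alpha(s(x)) + \Big(\eta k_3 - \frac{1}{4\varpi k_1}\Big)V_\theta .
\]
Substituting $\varpi=\frac{1}{4(\eta-\eta_c)k_1k_3}$ gives $\frac{1}{4\varpi k_1}=(\eta-\eta_c)k_3$, so the bracket collapses to $\eta_c k_3$. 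Writing $s(x)=s_\theta+\eta V_\theta$ and using $\alpha(s)=\frac{\eta_c k_3}{\eta}s$ we obtain $\alpha(s(x))=\frac{\eta_c k_3}{\eta}s_\theta+\eta_c k_3 V_\theta$, and the two $\eta_c k_3 V_\theta$ terms cancel, delivering $\dot s_\theta\ge -\frac{\eta_c k_3}{\eta}s_\theta$ as required. The restriction $\eta>\eta_c>0$ is precisely what keeps $\varpi>0$ and makes both $\alpha$ and the comparison function genuine class $\mathcal{K}_\infty$ functions, while $\partial s/\partial x\neq 0$ on $\mathcal{D}$ is the customary regularity condition ensuring the zero-superlevel boundary is regular and the RCBF constraint on $u$ is well posed.

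I expect the only delicate point to be the bookkeeping in the Young's-inequality split: the weight placed on $\Vert\mathcal{L}_\omega s(x)\Vert^2$ must equal the coefficient $\varpi$ furnished by $\Xi(x)$ exactly — not merely be dominated by it — since this exact cancellation is what makes the compensation ``precise'' (non-conservative) and pins the uncertain residual to $\frac{1}{4\varpi}\Vert\tilde\theta\Vert^2$; once this is arranged together with the prescribed $\varpi$ and $\alpha$, the remainder is a routine chain of substitutions closed by the comparison lemma.
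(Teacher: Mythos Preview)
Your proposal is correct and follows essentially the same approach as the paper's proof: both differentiate $s_\theta$, use the RCBF inequality and the identifier bounds \eqref{theta lyapunov}, and arrange for the $\Vert\tilde\theta\Vert^2$ residual to cancel against the portion $(\eta-\eta_c)k_3$ of the $-\eta\dot V_\theta$ term, leaving $\dot s_\theta\ge -\frac{\eta_c k_3}{\eta}s_\theta$. The only cosmetic difference is that the paper completes the square explicitly and converts $(\eta-\eta_c)k_3 V_\theta$ into $(\eta-\eta_c)k_1k_3\Vert\tilde\theta\Vert^2$ before canceling, whereas you apply Young's inequality and keep the bookkeeping in terms of $V_\theta$; these are equivalent manipulations.
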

\begin{proof}
Taking the time derivative of $s_{\theta}(x,t)$ yields
\begin{flalign}\label{rcbf-proof}
\dot{s}_{\theta}&(x,t) = \dot{s}(x) - \eta \dot{V}_{\theta} \notag \\
&\ge \mathcal{L}_\omega s(x)(\hat{\theta}+\tilde{\theta}) + \mathcal{L}_{\rho} s(x) u(x) + (\eta - \eta_c) k_3 V_{\theta} + \eta_c k_3 V_{\theta} \notag \\
&\ge \mathcal{L}_\omega s(x)\hat{\theta} + \mathcal{L}_{\rho}s(x)u - \Vert \mathcal{L}_\omega s(x) \Vert \Vert \tilde{\theta} \Vert  +(\eta - \eta_c)k_1k_3\Vert \tilde{\theta} \Vert^2 + \eta_c k_3V_{\theta} 
\end{flalign}
Applying the RCBF condition \eqref{rcbf} to the above inequality leads to:
\begin{flalign}\label{rcbf proof2}
\dot{s}_{\theta}(x,t)\ge &-\alpha(s)  + \eta_c k_3V_{\theta} + \left( \sqrt{\varpi}\Vert \mathcal{L}_\omega s(x) - \frac{\Vert \tilde{\theta} \Vert}{2\sqrt{\varpi}} \Vert   \right)^2\notag\\
& - \frac{\Vert \tilde{\theta} \Vert^2}{4\varpi} + (\eta - \eta_c)k_1k_3\Vert \tilde{\theta} \Vert^2,    
\end{flalign}
Substituting $\varpi = \frac{1}{4(\eta-\eta_c)k_1k_3}$ and $\alpha(s) = \frac{\eta_c k_3}{\eta}s(x)$ into \eqref{rcbf proof2} yields $\dot{s}_{\theta}(x,t) \ge -\frac{\eta_c k_3}{\eta} s_{\theta}(x,t)$. Therefore, the controller $u(x)$ is verified to render the set $\mathcal{S}_{\theta}$ forward invariant when RCBF \eqref{rcbf} holds. Note that $\mathcal{S}_{\theta}\subset\mathcal{S}$, which means that the set $\mathcal{S}$ is forward invariant.
\end{proof}

\subsection{Optimal controller in RCBF-constrained policy space}
Recall the COCP in \ref{cocp}, where the RCBF \eqref{rcbf} enforces the forward invariance of the safe set $\mathcal{D}$. We now explore how to derive a safe control policy that minimizes the performance metric $J(x,u)$ within the RCBF-constrained policy space.

Let $V_{\lambda}^*(x)$ denote the ``safety-embedded" value function (SVF), representing the optimal cost-to-go under the RCBF-constrained policy space: $V_{\lambda}^*(x) = \inf_{u\in\mathcal{U}_{\rm rcbf}}\mathcal{J}(x,u)$. The RCBF-based safety constraint is given by: $\nu(x,\theta,u) = \mathcal{L}_\omega s(x)\theta + \mathcal{L}_{\rho} s(x)u+ \alpha(s) - \Xi(x)\ge 0$. By introducing a Lagrange multiplier $\lambda\left(x(t)\right) \ge 0\ \ \forall t\in[t_0,\infty)$ for this constraint, we can formulate the SVF $V_{\lambda}^*(x)$ as 
\begin{flalign}\label{lagrange value function}
    V_{\lambda}^*(x) = \min_{u}\max_{\lambda}\int_{t}^{\infty}\Big( l(x,u)-\lambda\cdot\nu(x,\theta,u) \Big){\rm d}\tau.
\end{flalign}
Based on the fundamental principle of dynamic programming and the KKT conditions \cite{almubarak2021hjb}, we can obtain the stationary RCBF-constrained Hamilton-Jacobi-Bellman (HJB) partial differential equation \cite{kirk2004optimal}: for any $x\in\mathcal{X}$,
\begin{flalign}\label{constrained HJB}
\min_{u}\max_{\lambda\ge 0} \Big\{ & l(x,u) -\lambda\cdot\nu(x,\theta,u) + \nabla V_{\lambda}^{*}\big[ \zeta(x,\theta) + \rho(x)u \big] \Big\}= 0, 
\end{flalign}
where the solution $\left(\lambda^*,u_{\lambda}^*\right)$ satisfies the complementary slackness condition $\lambda^*\left(x\right)\cdot \nu(x,\theta,u_{\lambda}^*)=0$. Since the system is autonomous and the RCBF-based safety constraint is stationary, the Lagrange multiplier is a function of the state $x$. For any fixed state $x$, the constrained HJB \eqref{constrained HJB} is convex with respect to the control policy $u$, allowing us to derive the optimal control policy $u_{\lambda}^*$ as
\begin{flalign}\label{optimal safe controller}
& u_{\lambda}^*\big(x,\lambda^*(x)\big) = R^{-1}\big[ \lambda^*(x)\mathcal{L}_{\rho} s(x)^\top - \mathcal{L}_{\rho}V_{\lambda}^{*}(x)^\top\big]. 
\end{flalign}
Substituting \eqref{optimal safe controller} into $\lambda^*\left(x\right)\cdot \nu(x,\theta,u_s^*)=0$, we can obtain the optimal Lagrange multiplier as
\begin{flalign}\label{optimal lamda}
& \lambda^*(x) = \max\left( -\frac{\nu(x,\theta,u_{\rm no}(x))} {R_{s\rho}(x)}, 0\right), 
\end{flalign}
where $u_{\rm no}(x) = -R^{-1}\mathcal{L}_{\rho}V_{\lambda}^*(x)^{\top}$, $R_{\rho}(x) = \rho(x)R^{-1}\rho(x)^\top$, and $R_{s\rho}(x) = \nabla s(x) R_{\rho}(x) \nabla s(x)^\top$.

\begin{remark}
The term $\nu(x,\theta,u_{\rm no}(x))$ is defined with the nominal control law $u_{\rm no}(x) = -R^{-1}L_{\rho}V_{\lambda}^*(x)^{\top}$, independent of the safety correction term $\lambda^* L_{\rho} s$. The term $\nu(x,\theta,u_{\rm no}(x))$ can be seen as a ``safety margin" at the current state $x$. If $\nu(x,\theta,u_{\rm no}(x))$ is positive for a specific state $x$, then the nominal control law $u_{\rm no}$ is adequate to guarantee system safety at $x$, rendering the safety correction term zero. Conversely, when $\nu(x,\theta,u_{\rm no}(x))$ is negative, the safety correction term exerts a non-zero effectiveness to ensure that the RCBF-based safety constraint is satisfied at state $x$. Robustness to parametric uncertainties is reflected in the larger Lagrange multipliers $\lambda^*$ and the value of the Lagrange multipliers balances the requirements of safety and optimal stabilization for the controller.
\end{remark} 

\section{Self-triggered mechanism}
In this section, we address the third challenge posed by the aforementioned control objectives: we design a self-triggered mechanism to implement the RCBF-constrained optimal controller $u_{\lambda}^*(x)$ \eqref{optimal safe controller}. This self-triggering mechanism is designed to simultaneously ensure stability and safety during triggering intervals while reducing computational overhead and communication frequency. Moreover, the designed self-triggered mechanism determines the next triggering instant based on the current sampled state, eliminating the need for continuous monitoring of the system. Here, we begin with the fundamental framework of event-triggered implementation mechanism.

The event-triggered control scheme is implemented through a discrete-time execution framework. This approach establishes an ordered sequence of activation time instants $\{t_j \}_{j=0}^{\infty}$ with $t_0=0$ and $t_j < t_{j+1}$ for all $j\ge 0$, where each triggering interval $T_j = t_{j+1}-t_j$ represents the duration between consecutive control updates. We denote the sampled state as
\begin{flalign}
&\breve{x}(t) = \left\{ \begin{array}{ll}
	x(t_j),& t\in[t_j,t_{j+1})  \\
	x(t_{j+1}), & t=t_{j+1}. 
\end{array} \right.
\end{flalign}  
As analyzed in the previous section, the time-triggered optimal safe controller with a Lagrange multiplier function $\lambda^*(x)$ takes the form of \eqref{optimal safe controller}. During triggering intervals, the controller maintains constant via a zero-order holder of the most recent sampled state measurement $\breve{x}(t_j)$ throughout each interval $[t_j,t_{j+1})$, only acquiring fresh state information when the triggering condition is violated at $t_{j+1}$. Based on this, we can formulate the event-triggered controller as:
\begin{flalign}\label{et-controller}
& u_d^* \big(\breve{x},\lambda^*(\breve{x})\big) = \left\{ \begin{array}{ll}
	u^*_{\lambda}\Big(x(t_j),\lambda^*(x(t_j))\Big), &t\in[t_j,t_{j+1})  \\
	u^*_{\lambda}\Big(x(t_{j+1}),\lambda^*(x(t_{j+1}))\Big), & t=t_{j+1}. 
\end{array} \right.
\end{flalign}
The event-sampling error $\breve{e}_j(t)=x(t)-\breve{x}_j,t\in[t_j,t_{j+1})$ quantifies the discrepancy between the true state $x$ and the sampled state $\breve{x}$.

\subsection{Triggering threshold for stability guarantee}
Before presenting the triggering threshold for stability guaranteeing, we make the following assumption regarding the ``nominal" controller:
\begin{assumption}\label{lipschitz}
The nominal optimal control policy $u_{\rm no} = R^{-1}\mathcal{L}_{\rho}V^*_{\lambda}(x)^\top$ is $d_v$-Lipschiz:
\begin{flalign}
    \left\Vert \mathcal{L}_{\rho}V^*_{\lambda}(\breve{x})R^{-1} - \mathcal{L}_{\rho}V^*_{\lambda}(x)R^{-1} \right\Vert \leq d_v \Vert \breve{x}-x \Vert = d_v\Vert \breve{e} \Vert, 
\end{flalign}
where $d_v$ is the Lipschitz constant of the function $\mathcal{L}_{\rho}V^*_{\lambda}(\cdot)R^{-1}$.
\end{assumption}
\begin{remark}
The time-triggered controller \eqref{optimal safe controller} consists of two parts: one is the so-called ``nominal" controller \( u_{\rm no} \), and the other is the safety-inducing term \( R^{-1}\lambda^*(x)\mathcal{L}_{\rho} s(x)^\top \). Intuitively, the design of the triggering threshold related to stability is only associated with the ``nominal" controller, since there always exists a neighborhood centered at the equilibrium point $\mathcal{B}_r(0)$ where the constraints are not activated and the safety-inducing term is zero.
\end{remark}
To ensure system stability during triggering intervals, we propose the following triggering threshold condition concerning event-sampling error:
\begin{flalign}\label{etm}
& \Vert \breve{e}_j(t) \Vert \leq \sqrt{\frac{(1-\mu)\lambda_{\min}(Q)\Vert x \Vert^2}{2d_v^2\Vert R \Vert} } \triangleq f_{\rm v,event}(\Vert x\Vert),\quad t\in[t_{j},t_{j+1}) 
\end{flalign}
where $f_{\rm v,event}(\Vert x\Vert)$ is the event-triggering threshold function.
\begin{thm}\label{etm thm}
Suppose that Assumption \ref{lipschitz} holds. We can conclude that the true state of the system under the event-triggered controller is UUB, provided that the event-triggering condition \eqref{etm} holds.
\end{thm}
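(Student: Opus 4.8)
The plan is to take the safety-embedded value function $V_{\lambda}^{*}$ as a Lyapunov candidate and to show that the triggering rule \eqref{etm} renders $\dot V_{\lambda}^{*}<0$ outside a compact residual set. First I would record that, on the compact working domain, $V_{\lambda}^{*}$ is continuously differentiable and admits class-$\mathcal{K}_{\infty}$ bounds from above and below (it is the optimal cost-to-go of a problem with positive-definite running cost $l$ and, by complementary slackness, equals $\int l\,d\tau\ge0$ along the optimal trajectory), so that its derivative along the sampled-data closed loop $\dot x=\zeta(x,\theta)+\rho(x)u_{d}^{*}(\breve x)$ is well defined. Differentiating, adding and subtracting $\nabla V_{\lambda}^{*}(x)\rho(x)u_{\lambda}^{*}(x)$, and invoking the RCBF-constrained HJB \eqref{constrained HJB} at the saddle pair $(u_{\lambda}^{*}(x),\lambda^{*}(x))$ together with the slackness identity $\lambda^{*}(x)\nu(x,\theta,u_{\lambda}^{*}(x))=0$ — which yields $\nabla V_{\lambda}^{*}(x)[\zeta+\rho u_{\lambda}^{*}]=-l(x,u_{\lambda}^{*}(x))$ — I obtain
\[
\dot V_{\lambda}^{*}=-l\big(x,u_{\lambda}^{*}(x)\big)+\nabla V_{\lambda}^{*}(x)\rho(x)\big[u_{d}^{*}(\breve x)-u_{\lambda}^{*}(x)\big].
\]

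Next I would bound the cross term using Assumption \ref{lipschitz}. On the neighbourhood of the origin where the RCBF constraint is inactive — which, as noted in the remark following Assumption \ref{lipschitz}, contains a ball $\mathcal{B}_{r}(0)$ — one has $u_{\lambda}^{*}=u_{\rm no}$, hence $u_{d}^{*}(\breve x)=u_{\rm no}(\breve x)$, and the identity $\nabla V_{\lambda}^{*}(x)\rho(x)=-u_{\rm no}(x)^{\top}R$ holds by definition of $u_{\rm no}$. Using $l(x,u_{\lambda}^{*})\ge\lambda_{\min}(Q)\Vert x\Vert^{2}+\tfrac12 u_{\rm no}(x)^{\top}Ru_{\rm no}(x)$ and completing the square in $u_{\rm no}(\breve x)-u_{\rm no}(x)$, the cross term and the quadratic input cost combine so that, after invoking Assumption \ref{lipschitz},
\[
\dot V_{\lambda}^{*}\le-\lambda_{\min}(Q)\Vert x\Vert^{2}+\tfrac12 d_{v}^{2}\Vert R\Vert\,\Vert\breve e_{j}(t)\Vert^{2}.
\]
Substituting the triggering bound \eqref{etm}, namely $\Vert\breve e_{j}\Vert^{2}\le(1-\mu)\lambda_{\min}(Q)\Vert x\Vert^{2}/(2d_{v}^{2}\Vert R\Vert)$, absorbs the perturbation into a fraction of the nominal decrease rate and leaves $\dot V_{\lambda}^{*}\le-\tfrac{3+\mu}{4}\lambda_{\min}(Q)\Vert x\Vert^{2}<0$ in this regime.

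Finally I would assemble the UUB conclusion. Near the boundary of $\mathcal{D}$ the safety correction $R^{-1}\lambda^{*}(x)\mathcal{L}_{\rho}s(x)^{\top}$ is active, so Assumption \ref{lipschitz} no longer controls $u_{d}^{*}(\breve x)-u_{\lambda}^{*}(x)$ and $\dot V_{\lambda}^{*}$ need not be sign definite there; however, Theorem \ref{th-cbf1} guarantees that the closed-loop trajectory stays in the compact safe set $\mathcal{D}\subseteq\mathcal{X}$, so no escape occurs. Likewise, since $f_{\rm v,event}(\Vert x\Vert)\to0$ as $\Vert x\Vert\to0$, the rule \eqref{etm} cannot be sustained inside an arbitrarily small ball $\mathcal{B}_{\delta}(0)$ without Zeno accumulation, which is precisely why the guarantee is convergence to a neighbourhood of the origin rather than asymptotic stability. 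Combining the strict decrease of $V_{\lambda}^{*}$ for states with $\Vert x\Vert$ bounded away from zero with the compactness-based boundedness in the active-constraint region and applying the standard Lyapunov ultimate-boundedness theorem then gives that $x(t)$ is UUB. The main obstacle is exactly this stitching: Assumption \ref{lipschitz} constrains only the nominal component $u_{\rm no}$, so the cross-term estimate is rigorous only where the safety correction vanishes, and closing the argument on the remainder of the state space requires leaning on the forward invariance of $\mathcal{D}$ from Theorem \ref{th-cbf1} together with a minimum dwell-time / residual-set argument near the equilibrium.
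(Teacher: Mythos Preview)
Your proposal starts from the right Lyapunov candidate and the right cancellation via the constrained HJB, but the decomposition you pursue diverges from the paper's and leaves a genuine gap.

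The paper does \emph{not} split the state space into constraint-inactive and constraint-active regions. Instead it exploits the explicit form of the optimal controller \eqref{optimal safe controller} to write, globally on $\mathcal X$,
\[
\nabla V_{\lambda}^{*}(x)\rho(x)=\lambda^{*}(x)\,\mathcal L_{\rho}s(x)-u_{\lambda}^{*}(x)^{\top}R,
\]
so that the mismatch term $\nabla V_{\lambda}^{*}\rho\,(u_{d}^{*}-u_{\lambda}^{*})$ splits into two pieces $\Upsilon_{1}+\Upsilon_{2}$. Young's inequality is applied to each, and the controller mismatch is bounded as
\[
\Vert u_{d}^{*}(\breve x)-u_{\lambda}^{*}(x)\Vert^{2}\le 8\,\overline{\Vert\lambda^{*}\mathcal L_{\rho}s\,R^{-1}\Vert}^{2}+2d_{v}^{2}\Vert\breve e\Vert^{2},
\]
where the first term is the \emph{uniform} sup-norm bound of the safety correction over the compact domain. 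All safety-related contributions are thus absorbed into a single constant $C_{1}=\tfrac{17}{2}\,\overline{\Vert\lambda^{*}\mathcal L_{\rho}s\,R^{-\frac12}\Vert}^{2}$, yielding the global inequality $\dot V_{\lambda}^{*}\le-\mu\lambda_{\min}(Q)\Vert x\Vert^{2}+C_{1}$ once \eqref{etm} is substituted. Khalil's Theorem~4.18 then gives UUB directly.

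Your stitching argument, by contrast, admits that $\dot V_{\lambda}^{*}$ need not be sign-definite in the active region and tries to compensate via forward invariance of $\mathcal D$. This is problematic on two counts. First, Theorem~\ref{th-cbf1} is stated for the time-triggered controller satisfying \eqref{rcbf} at every $x(t)$; under the event-triggered $u_{d}^{*}(\breve x)$ the RCBF inequality is only enforced at sampling instants, so you cannot invoke it here. Second, even granting compactness of $\mathcal D$, mere boundedness in the active region plus strict decrease in the inactive region does not yield UUB: the trajectory could repeatedly traverse the active band, gaining in $V_{\lambda}^{*}$ each pass, unless you control that growth quantitatively. The paper sidesteps this entirely by treating the safety correction as a bounded perturbation everywhere, which is exactly why the triggering threshold \eqref{etm} carries the factor $2d_{v}^{2}\Vert R\Vert$ rather than the $\tfrac12 d_{v}^{2}\Vert R\Vert$ your inactive-region estimate produces.
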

\begin{proof}\label{proof for st threshold}
Let the optimal SVF $V_{\lambda}^*(x)$ be the Lyapunov candidate. Differentiating $V_{\lambda}^*(x)$ along the solution of $\dot{x}=\zeta(x,\theta)+\rho(x)u_{d}(\breve{x})$, we have
\begin{flalign}
\dot{V}_{\lambda}^* &= \nabla V_{\lambda}^*\Big(\zeta(x,\theta)+\rho(x)u_{d}^*\big(\breve{x},\lambda^*(\breve{x})\big) \Big) \notag\\[3pt]
&= \nabla V_{\lambda}^*\Big(\zeta(x,\theta)+\rho(x) u_{\lambda}^*\big(x,\lambda^*(x)\big) \Big) + \nabla V_{\lambda}^* \rho \Big( u_{d}^*\big(\breve{x},\lambda^*(\breve{x})\big)-u_{\lambda}^*\big(x,\lambda^*(x)\big) \Big) \notag \\[3pt]
&= -x^TQx - \frac{1}{2}u_{\lambda}^{*\top}Ru_{\lambda}^* + \nabla V_{\lambda}^* \rho \Big( u_{d}^*\big(\breve{x},\lambda^*(\breve{x})\big)-u_{\lambda}^*\big(x,\lambda^*(x)\big) \Big). \label{556}  
\end{flalign}
Note that $\nabla V_{\lambda}^*\rho = \lambda^*\mathcal{L}_{\rho} s - u_{\lambda}^*\big(x,\lambda^*(x)\big)^{\top} R$. We can obtain that
\begin{flalign}\label{333}
\nabla V_{\lambda}^*\rho\Big( u_{d}^*\big(\breve{x},\lambda^*(\breve{x})\big) &- u_{\lambda}^*\big(x,\lambda^*(x)\big) \Big) =\underbrace{\lambda^* \mathcal{L}_{\rho} s\Big( u_{d}^*\big(\breve{x},\lambda^*(\breve{x})\big)-u_{\lambda}^*\big(x,\lambda^*(x)\big) \Big)}_{\Upsilon_1} \notag\\
&\underbrace{ -u_{\lambda}^*\big(x,\lambda^*(x)\big)^{\top}R\big( u_{d}^*\big(\breve{x},\lambda^*(\breve{x})\big)-u_{\lambda}^*\big(x,\lambda^*(x)\big) \big) }_{\Upsilon_2}. 
\end{flalign}
Using Young's inequality, we have that
\begin{flalign}
& \Upsilon_1 \leq \frac{1}{2}\left\Vert \lambda^* \mathcal{L}_{\rho} s R^{-\frac{1}{2}}\right\Vert^2 + \frac{1}{2}\left\Vert R \right\Vert \left\Vert u_{d}^*\big(\breve{x},\lambda^*(\breve{x})\big) - u_{\lambda}^*\big(x,\lambda^*(x)\big) \right\Vert^2, \label{554}\\
& \Upsilon_2 \leq \frac{1}{2}u_{\lambda}^{*\top}Ru_{\lambda}^* + \frac{1}{2}\Vert R \Vert \left\Vert u_{d}^*\big(\breve{x},\lambda^*(\breve{x})\big) - u_{\lambda}^*\big(x,\lambda^*(x)\big) \right\Vert^2. \label{555}
\end{flalign}
Using the fact \eqref{optimal safe controller} and the conclusion of Assumption \ref{lipschitz}, we can obtain that
\begin{flalign}\label{mkl}
&\quad\ \left\Vert u_{d}^*\big(\breve{x},\lambda^*(\breve{x})\big) - u_{\lambda}^*\big(x,\lambda^*(x)\big) \right\Vert^2 \notag\\
&= \big\Vert  \big\{ \left[\lambda^*(\breve{x})\mathcal{L}_{\rho} s(\breve{x}) - \lambda^*(x)\mathcal{L}_{\rho} s(x)\right] - \big[ \mathcal{L}_{\rho}V^*_{\lambda}(\breve{x}) - \mathcal{L}_{\rho}V^*_{\lambda}(x) \big] \big\} R^{-1}\big\Vert^2\notag\\
&\leq 8\overline{\big\Vert \lambda^*(x)\mathcal{L}_{\rho}s(x) R^{-1}\big\Vert}^2 + 2 d_v^2 \Vert \breve{e} \Vert^2
\end{flalign}
Substituting \eqref{mkl}, \eqref{554} and \eqref{555} into \eqref{556}, we can obtain that
\begin{flalign}
\dot{V}_{\lambda}^*(x) \leq & - (1-\mu)\lambda_{\min}(Q)\Vert x \Vert^2 + 2d_v^2 \Vert R \Vert \Vert \breve{e} \Vert^2 -\mu \lambda_{\min}(Q)\Vert x \Vert^2 + C_1,
\end{flalign}
where $0<\mu<1$, $C_1 = \frac{17}{2} \overline{\left\Vert \lambda^* \mathcal{L}_{\rho} s R^{-\frac{1}{2}}\right\Vert}^2$. If the condition \eqref{etm} holds, it follows
\begin{flalign}
\dot{V}_{\lambda}^*(x) \leq -\mu \lambda_{\min}(Q)\Vert x \Vert^2 + C_1.
\end{flalign}
Therefore, the Lyapunov candidate $V_{\lambda}^*$ satisfies $\dot{V}_{\lambda}^* < 0$ if $\Vert x \Vert > \sqrt{ \frac{C_1}{\mu\lambda_{\min}(Q)} }$. Using \cite[Theorem 4.18]{hassan2002nonlinear}, we can conclude that the state $x$ is UUB.
\end{proof}

Subsequently, we extend the event-triggered condition \eqref{etm} to a self-triggered one. With the relationship $\breve{e}_j(t) = x(t)-\breve{x}_j(t)$ as well as inequality $2a^\top b \ge -\pi a^\top a - \frac{1}{\pi} b^\top b,\ \pi\in(0,1)$, we can derive that
\begin{flalign}\label{ppo}
\Vert x(t) \Vert^2 &= \breve{x}_j^\top\breve{x}_j + 2\breve{x}_j^\top\breve{e}_j(t) + \breve{e}_j(t)^\top\breve{e}_j(t)\notag \\
&\ge(1-\pi)\Vert \breve{x}_j \Vert^2 - (\frac{1}{\pi}-1)\Vert \breve{e}_j(t) \Vert^2. 
\end{flalign}
Substituting the inequality \eqref{ppo} into the event-triggering condition \eqref{etm} yields
\begin{flalign}
& f_{\rm v,event}(\Vert x \Vert) \ge \sqrt{ \frac{\chi_1 \Vert \breve{x}_j \Vert^2 - \chi_2\Vert \breve{e}_j \Vert^2 }{2d_v^2 \Vert R \Vert} } ,\ t\in[t_j,t_{j+1}), 
\end{flalign}
where $\chi_1=(1-\mu)(1-\pi)\lambda_{\min}(Q)$, $\chi_2=(1-\mu)(\frac{1}{\pi}-1)\lambda_{\min}(Q)$. Through simple mathematical transformation, we can obtain the self-triggering threshold as
\begin{flalign}\label{self-triggered me}
& \Vert \breve{e}_j(t) \Vert \leq \sqrt{\frac{ \chi_1\Vert \breve{x}_j \Vert^2}{2d_v^2\Vert R \Vert + \chi_2 } } \triangleq f_{\rm v,self}(\Vert\breve{x}_j\Vert),\ t\in[t_j,t_{j+1}), 
\end{flalign}
where $f_{\rm v,self}(\Vert\breve{x}_j\Vert)$ denotes the self-triggering threshold function, which can be determined by current sampled state $\breve{x}_j$.

\subsection{Triggering threshold for safety guarantee}
Intuitively, the designed RCBF safety constraints \eqref{rcbf} are only activated when the system trajectory approaches the safety boundaries, meaning we do not need to calculate the RCBF-based safeguard controller $\lambda^*R^{-1}\mathcal{L}_{\rho}s(x)^\top$ in real time. Similarly to the triggering condition \eqref{self-triggered me} for stability guarantees discussed in the previous subsection, we propose a self-triggered update approach for the RCBF-based safeguard controller. This method effectively reduces computational costs while ensuring that no safety restrictions are violated. 

Similarly, we assume that there exists an identifier for the uncertain system \eqref{system} with a Lyapunov function $V_{\theta}$ characterizing the convergence of the estimation error $\tilde{\theta}$ as \eqref{theta lyapunov}. Consider the robust safe set $\mathcal{D}_{\theta}\triangleq\{x\in\mathcal{X}:s_{\theta}(x,t)\ge0\}$ with $s_{\theta}(x,t)\triangleq s(x) - \eta V_{\theta}$. 
\begin{thm}
Suppose that $s_{\theta}(x(0),0)>0$ and $\frac{\partial s}{\partial x}\neq 0,\ \forall x\in\mathcal{D}$. At the triggering instant $t_j$, the self-triggered RCBF-based safety constraint imposed on a self-triggered state-feedback controller $u_d$ is given as
\begin{flalign}\label{nu_d}
\nu_d(\breve{x}_j,\hat{\theta},u_d)=\mathcal{L}_\omega s(\breve{x}_j)\hat{\theta}(t_j) + \mathcal{L}_{\rho} s(\breve{x}_j) u_d(\breve{x}_j) - \Xi(\breve{x}_j) + \gamma\alpha \big(s(\breve{x}_j)\big)\ge 0,
\end{flalign}
where $\gamma\in(0,1)$ is a user-defined constant. Then, the self-triggering threshold for safety gurantee is provided as
\begin{flalign}\label{self-trigger cbf}
    \Vert \breve{e}_j \Vert \leq f_{\rm s,self}(\Vert \breve{x}_j \Vert),
\end{flalign}
where $f_{\rm s,self}(\Vert \breve{x}_j \Vert)$ is defined in two cases:
\begin{flalign}
f_{\rm s,self}(\Vert \breve{x}_j \Vert) = \left\{ \begin{array}{ll}
   \overline{M}_j^{-1}\Big( \nu_d(\breve{x}_j,\hat{\theta}(t_j),u_d(\breve{x}_j)) \Big),  &\nu_d \ge (1-\gamma)\alpha_s\cdot s(\breve{x}_j) , \\
   \overline{M}_j^{-1}\Big( (1-\gamma)\alpha_s\cdot s(\breve{x}_j) \Big),  & \nu_d<(1-\gamma)\alpha_s\cdot s(\breve{x}_j),
\end{array}\right.
\end{flalign}
where $\overline{M}_j:\mathbb{R}_{\ge 0}\mapsto\mathbb{R}_{\ge 0},\ j=1,2,\dots$ is a monotonically increasing function concerning the norm of sampling error $\Vert \breve{e}_j \Vert$, the specific definition of which will be detailed in the following proof. Under this triggering threshold, we can conclude that the system trajectory is safe.
\end{thm}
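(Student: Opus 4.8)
\emph{Proof idea.} The goal is to show $s_\theta(x(t),t)\ge 0$ for all $t\ge t_0$; since $s(x(t))=s_\theta(x(t),t)+\eta V_\theta\ge s_\theta(x(t),t)$, this yields $x(t)\in\mathcal{D}$, i.e.\ trajectory safety. I would argue by induction over the triggering sequence $\{t_j\}_{j\ge 0}$, with inductive hypothesis ``$s_\theta(\breve{x}_j,t_j)\ge 0$'' (equivalently $s(\breve{x}_j)\ge 0$, since $V_\theta\ge 0$); the base case is the assumed $s_\theta(x(0),0)>0$.

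Fix an interval $[t_j,t_{j+1})$, on which $u\equiv u_d(\breve{x}_j)$ so that $\dot x=\omega(x)\theta+\rho(x)u_d(\breve{x}_j)$. Differentiating $s_\theta(x(t),t)=s(x(t))-\eta V_\theta$ and reproducing the completing-the-square computation from the proof of Theorem~\ref{th-cbf1} — using $\dot V_\theta\le-k_3V_\theta$ and $V_\theta\ge k_1\Vert\tilde\theta\Vert^2$ from \eqref{theta lyapunov}, the gain $\varpi=\frac{1}{4(\eta-\eta_c)k_1k_3}$, and $\alpha(s)=\alpha_s s$ with $\alpha_s=\frac{\eta_c k_3}{\eta}$ — would yield, with the ``held'' RCBF residual $g(x):=\mathcal{L}_\omega s(x)\hat\theta(t_j)+\mathcal{L}_\rho s(x)u_d(\breve{x}_j)-\Xi(x)+\alpha(s(x))$,
\[
\dot s_\theta(x(t),t)\ \ge\ g(x(t))-\alpha_s\,s_\theta(x(t),t),\qquad t\in[t_j,t_{j+1}).
\]
Hence it suffices to show $g(x(t))\ge 0$ on the interval: the comparison lemma then gives $s_\theta(x(t),t)\ge e^{-\alpha_s(t-t_j)}s_\theta(\breve{x}_j,t_j)\ge 0$, which proves safety on $[t_j,t_{j+1})$ and restores the inductive hypothesis at $t_{j+1}$.

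To bound $g(x(t))$ from below I would compare it with its sampled value. Since $\mathcal{L}_\omega s(\cdot)$, $\mathcal{L}_\rho s(\cdot)$, $\Xi(\cdot)=\varpi\Vert\mathcal{L}_\omega s(\cdot)\Vert^2$ and $s(\cdot)$ are continuously differentiable, hence Lipschitz on the compact operating region $\mathcal{X}$, and $\hat\theta(t_j),u_d(\breve{x}_j)$ are bounded, there is $L_j>0$ — assembled from these Lipschitz constants and the magnitudes $\Vert\hat\theta(t_j)\Vert$, $\Vert u_d(\breve{x}_j)\Vert$ — with $\vert g(x(t))-g(\breve{x}_j)\vert\le L_j\Vert x(t)-\breve{x}_j\Vert=L_j\Vert\breve{e}_j(t)\Vert$; this exhibits the monotone, invertible map $\overline{M}_j(r):=L_j r$ of the statement. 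By construction $\nu_d(\breve{x}_j,\hat\theta(t_j),u_d(\breve{x}_j))\ge 0$, and the inductive hypothesis gives $s(\breve{x}_j)\ge 0$, so from \eqref{nu_d},
\[
g(\breve{x}_j)=\nu_d(\breve{x}_j,\hat\theta(t_j),u_d(\breve{x}_j))+(1-\gamma)\alpha_s\,s(\breve{x}_j)\ \ge\ \max\{\,\nu_d,\ (1-\gamma)\alpha_s\,s(\breve{x}_j)\,\},
\]
and the right-hand maximum is exactly the argument of $\overline{M}_j^{-1}$ in the two cases of $f_{\rm s,self}$. Therefore, whenever the triggering condition \eqref{self-trigger cbf} holds, monotonicity of $\overline{M}_j$ gives $\overline{M}_j(\Vert\breve{e}_j(t)\Vert)\le\max\{\nu_d,(1-\gamma)\alpha_s s(\breve{x}_j)\}\le g(\breve{x}_j)$, whence $g(x(t))\ge g(\breve{x}_j)-\overline{M}_j(\Vert\breve{e}_j(t)\Vert)\ge 0$, closing the induction.

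The step I expect to be the main obstacle is the first display: transplanting the Theorem~\ref{th-cbf1} estimate to the true running state $x(t)$ while the control input and the parameter estimate remain frozen at their sampled values $u_d(\breve{x}_j)$, $\hat\theta(t_j)$. Concretely, one must control the mismatch between $\theta-\hat\theta(t_j)$ (constant over the interval) and the estimation error that $V_\theta$ actually measures, so that the compensation $\Xi$ — tuned against the worst-case uncertainty — still absorbs the cross term $\Vert\mathcal{L}_\omega s(x(t))\Vert\,\Vert\theta-\hat\theta(t_j)\Vert$. This is handled through the exponential-convergence property \eqref{theta lyapunov} together with $\Vert\theta-\hat\theta(t_j)\Vert^2\le V_\theta(t_j)/k_1$ (and, if the identifier keeps updating on $[t_j,t_{j+1})$, by folding the small drift $\hat\theta(t)-\hat\theta(t_j)$ into $L_j$). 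The remaining pieces — the Lipschitz bookkeeping for $\overline{M}_j$, the identity $g(\breve{x}_j)=\nu_d+(1-\gamma)\alpha_s s(\breve{x}_j)$, and the comparison-lemma step — are routine.
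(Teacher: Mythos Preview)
Your overall architecture coincides with the paper's: differentiate $s_\theta$, produce a differential inequality $\dot s_\theta\ge-\alpha_s s_\theta+(\text{residual})$, bound the residual from below by comparing continuous-time quantities with their sampled values through a monotone map $\overline{M}_j(\|\breve{e}_j\|)$, and invoke the triggering threshold plus the comparison lemma. The identity $g(\breve{x}_j)=\nu_d+(1-\gamma)\alpha_s\,s(\breve{x}_j)$ and the observation that the two-case threshold equals $\overline{M}_j^{-1}\bigl(\max\{\nu_d,(1-\gamma)\alpha_s s(\breve{x}_j)\}\bigr)\le\overline{M}_j^{-1}\bigl(g(\breve{x}_j)\bigr)$ are exactly what drives the paper's final line \eqref{self-triggered CBF proof}.

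The gap is your concrete choice $\overline{M}_j(r)=L_j r$. In the completing-the-square step the cross term involves $\tilde\theta(t_j)$ (the estimate is held at its sampled value in $g$), whereas the available quadratic term is $(\eta-\eta_c)k_3V_\theta(t)\ge(\eta-\eta_c)k_1k_3\|\tilde\theta(t)\|^2$ --- and the identifier \emph{must} keep running, since otherwise $\dot V_\theta\le-k_3V_\theta$ fails and your first display collapses. The mismatched time indices leave a deficit $(\eta-\eta_c)k_3\bigl(V_\theta(t_j)-V_\theta(t)\bigr)\ge 0$ that is \emph{temporal}, not spatial, and therefore cannot be absorbed into a Lipschitz constant in $x$; your suggested fix of ``folding the drift $\hat\theta(t)-\hat\theta(t_j)$ into $L_j$'' has the same defect. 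The paper resolves this by evaluating the square at $\breve{x}_j$ (so cross and quadratic terms share the index $t_j$) and pushing both the spatial Lipschitz errors and the $V_\theta$ drift into a single correction $M_j(t)$; its bound \eqref{M(t)} then carries, in addition to the linear Lipschitz part, the term $(\eta-\eta_c)k_1k_3^2\,\overline{\|\tilde\theta\|}^{2}\,T_{{\rm s},j}(\|\breve{e}_j\|)$, where $T_{{\rm s},j}(\cdot)$ is the error-to-time map \eqref{safe period} derived from the system's growth bound in Appendix~\ref{appendix2}. That extra non-linear piece is what your $L_j r$ omits and is precisely the resolution of the obstacle you correctly flagged.
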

\begin{proof}
If the self-triggered RCBF-based safety constraint $\nu_d$ is calculated at triggering instant $t_j$, the safe period $T_{\rm s,j}$ corresponding to the self-triggering threshold for safety gurantee $f_{\rm s,self}(\Vert \breve{x}_j \Vert)$ can be obtained based on the current sampled state $\breve{x}_j$ as \cite{ming2022self}
\begin{flalign}\label{safe period}
    T_{\rm s,j}(\Vert \breve{e}_j \Vert)= \frac{1}{l_1+l_2}\ln\left( 1+ \frac{l_1+l_2}{l_1\Vert \breve{x}_j\Vert+l_3}\Vert \breve{e}_j \Vert \right).
\end{flalign}
The definitions of the constants $l_1$, $l_2$, $l_3$, and the detailed derivation process of $T_{\rm s,j}$ can be found in \ref{appendix2}. For $t\in[t_j,t_j+T_{\rm s,j})$, we define that
\begin{flalign}
 M_j(t) =&  \mathcal{L}_\omega s(x)\theta- \mathcal{L}_\omega s(\breve{x}_j)\theta +\mathcal{L}_{\rho} s(x)u_d -\mathcal{L}_{\rho} s(\breve{x}_j)u_d\notag \\
&+(\eta -\eta_c )k_3\left( V_{\theta}(\tilde{\theta}(t)) - V_{\theta}(\tilde{\theta}(t_j)) \right) + \alpha\big(s(x)\big)-\alpha\big(s(\breve{x}_j)\big), 
\end{flalign}
where $\eta > \eta_c > 0$. Using the inequality $\dot{V}_{\theta}\leq -k_1k_3 \Vert \tilde{\theta} \Vert^2$, we have
\begin{flalign}
& V_{\theta}(\tilde{\theta}(t)) - V_{\theta}(\tilde{\theta}(t_j)) \leq -k_1k_3\Vert \tilde{\theta}(t_j) \Vert^2 T_{\rm s,j}.
\end{flalign}
Let $d_1$, $d_2$, and $d_3$ denote the Lipschitz constants of $\mathcal{L}_\omega s(\cdot)$, $\mathcal{L}_{\rho} s(\cdot)$ and $s(\cdot)$. The term $M(t)$ can be bounded as follows:
\begin{flalign}\label{M(t)}
\vert M(t) \vert  \leq &\Bigg[ d_1\overline{\Vert \theta \Vert} + d_2 \overline{\Vert u_d \Vert}+ \frac{d_3 k_3\eta_c }{\eta} \Bigg] \Vert \breve{e}_{j} \Vert \notag \\[4pt]
& + (\eta -\eta_c )k_1 k_3^2\overline{\left\Vert \tilde{\theta} \right\Vert}^2T_{\rm s,j}( \Vert\breve{e}_j\Vert ) \triangleq  \overline{M}_j(\Vert \breve{e}_j \Vert). 
\end{flalign}
Taking the time derivative of $s_{\theta}(x,t)$ yields
\begin{flalign}\label{self-triggered CBF proof}
&\dot{s}_{\theta}(x,t) = \dot{s}(x) - \eta \dot{V}_{\theta} \notag \\[3pt]
&\ge \mathcal{L}_{\omega} s(x)(\hat{\theta}+\tilde{\theta}) + \mathcal{L}_{\rho} s(x) u_d(\breve{x}_j) + \eta k_3 V_{\theta} \notag \\[3pt]
& = \mathcal{L}_\omega s(\breve{x}_j)\hat{\theta}(t_j) + \mathcal{L}_{\rho} s(\breve{x}_j)u_d(\breve{x}_j)+\mathcal{L}_\omega s(\breve{x}_j)\tilde{\theta}(t_j) \notag \\[3pt]
&\quad + (\eta -\eta_c )k_3 V_{\theta}(\tilde{\theta}(t_j),t_j) + \eta_c k_3 V_{\theta}(\tilde{\theta}(t),t) + \alpha\big(s(x)\big)-\alpha\big(s(\breve{x}_j)\big) + M(t) \notag \\[3pt]
&\ge \mathcal{L}_\omega s(\breve{x}_j)\hat{\theta}(t_j)+\mathcal{L}_{\rho} s(\breve{x}_j)u_d(\breve{x}_j) - \Xi(\breve{x}_j) + \alpha\big(s(x)\big)-\alpha\big(s(\breve{x}_j)\big) + \eta_c k_3 V_{\theta} \notag\\[3pt]
&\quad + \left( \sqrt{\varpi}\Vert \mathcal{L}_\omega s(\breve{x}_j) \Vert - \frac{ \Vert \tilde{\theta}(t_j) \Vert }{2\sqrt{\varpi}}  \right)^2  + M(t)\notag\\[3pt]
&\ge -\frac{\eta_c k_3}{\eta}\cdot s_{\theta} + \nu_d(\breve{x}_j,\hat{\theta}(t_j),u_d(\breve{x}_j)) + (1-\gamma)\alpha\big( s(\breve{x}_j\big) -\overline{M}(\Vert\breve{e}_j\Vert).
\end{flalign}
Combining the self-triggered mechanism \eqref{self-trigger cbf}, we can derive that $\dot{s}_{\theta}\ge -\frac{\eta_c k_3}{\eta} s_{\theta}$, which verifies the safety guarantee during the triggering interval.
\end{proof}
\begin{remark}
We design the self-triggered mechanism for the calculation of self-triggered RCBF \eqref{nu_d} in two cases. The first case occurs when the system trajectory has not yet approached the safety constraints or remains sufficiently far from them, i.e., when a safety margin still exists in the triggering instant $t_j$: \( \nu_d\big(\breve{x}_j,\hat{\theta}(t_j),u_d(\breve{x}_j) \big) > 0\). The other scenario occurs when the RCBF constraint becomes active and the safety margin diminishes to zero, i.e., $\nu_d\big(\breve{x}_j,\hat{\theta}(t_j),u_d(\breve{x}_j) \big) = 0$. In this case, Zeno behavior may arise, where the triggering interval between consecutive RCBF computations approaches zero, a phenomenon that must be avoided.
\end{remark}

\subsection{Self-triggered RCBF-constrained optimal controller}
Combining the triggering thresholds for stability and safety guarantees \eqref{self-triggered me},\eqref{self-trigger cbf}, the self-triggered mechanism can be designed as
\begin{flalign}\label{self}
    \Vert \breve{e}_j \Vert \leq \min\{ f_{\rm v,self}(\Vert\breve{x}_j\Vert), f_{\rm s,self}(\Vert\breve{x}_j\Vert) \}
\end{flalign}
Similar to the construction of $T_{\rm s,j}$, we can derive the stability period $T_{\rm v,j}$ corresponding to the triggering threshold $f_{\rm v,self}(\Vert \breve{e}_j \Vert)$. Hence, the triggering interval $T_j$ can be determined as $T_j=\min\{T_{\rm v,j}, T_{\rm s,j}\}$. This ensures that the minimum inter-event time $\Delta t_{\min} = \min_{j\in\mathbb{N}}\{ T_j \}$ remains strictly positive, thereby preventing Zeno behavior.

Compared to RCBF \eqref{rcbf} for time-triggered control, self-triggered RCBF \eqref{nu_d} introduces only an additional adjustment factor \(\gamma\). We can define the optimal SVF $V_{\lambda}^*(x)$ with the optimal Lagrange multiplier function $\lambda_d^*:\mathbb{R}^n\mapsto\mathbb{R}_{\ge 0}$ induced by the self-triggered RCBF constraint \eqref{nu_d} as
\begin{flalign}
    V_{\lambda}^*(x) = \min_{u\in\mathcal{U}}\max_{\lambda_d\in\mathbb{R}_{\ge0}}\int_{t}^{\infty} \Big( l(x,u) - \lambda_d(x)\nu_d(x,\theta,u) \Big){\rm d}\tau.
\end{flalign}
Similarly, we can obtain the HJB equation augmented with the self-triggered RCBF \eqref{nu_d} as: for triggering instants $t=t_j,j=1,\dots,$
\begin{flalign}\label{self-triggered HJB}
l\left(x,u_d^*\right) -\lambda_d^*(x)\nu_d(x,\theta,u_d^*) + \nabla V_{\lambda}^*(x)\big[ \zeta(x,\theta) + \rho(x) u_d^* \big]= 0.  
\end{flalign}
The corresponding optimal safe controller $u_{\lambda}^*\big(x,\lambda_d^*(x)\big)$ takes the form of \eqref{optimal safe controller} and the optimal Lagrange multiplier $\lambda_d^*(x)$ can be derived as 
\begin{flalign}\label{lamda-d}
    \lambda_d^*(x) = \max\left( -\frac{\nu_d\big(x,\theta,u_{\rm no}(x)\big)}{R_{s\rho}(x)}, 0  \right)
\end{flalign}
with $u_{\rm no}(x) = -R^{-1}\rho(x)^\top \nabla V_{\lambda}^{*}(x)^\top$.

Due to the system's nonlinearity and the presence of uncertain parameters in the model, it is challenging to obtain an analytical solution for the self-triggered constrained HJB equation \eqref{self-triggered HJB}. In the next section, we will propose an online safety-embedded critic learning framework to approximate the optimal SVF $(V_{\lambda}^*)$ and the corresponding optimal safe controller $(u_{\lambda}^*)$.

\section{Online safety-embedded critic learning framework}\label{sec4}
In this section, we address the third challenge posed by the aforementioned control objectives: to numerically solve the constrained HJB equation \eqref{constrained HJB}, we develop a three-component online learning architecture that combines: parameter identification, value function approximation, and safety constraint enforcement. The parameter estimation module was previously detailed. We now concentrate on the adaptive critic learning framework with integrated safety constraints via adaptive Lagrange multiplier \cite{bandyopadhyay2023hjb}. 

\subsection{Parameter estimation}
We maintain parameter estimates $\hat{\theta}$ that generate the approximate drift dynamic $\hat{\zeta} = (x,\hat{\theta})\triangleq \omega(x)\hat{\theta}$. The identification of uncertain parameters adopts the method proposed in \cite{mahmud2021safe}. We denote the input coupling as $\varrho(x,u) = \rho(x)u$ and define the intermediate integral variables as follows:
\begin{flalign}
    & \dot{\Omega} = \left\{ \begin{array}{cl}
        \omega(x), & \Vert \Omega_f \Vert \leq \overline{\Omega}_f, \\
         0, & {\rm otherwise}, 
    \end{array} \right. \quad \Omega(0)=0, \\
    & \dot{\Omega}_f = \left\{ \begin{array}{cl}
        \Omega^\top\Omega, & \Vert \Omega_f \Vert \leq \overline{\Omega}_f, \\
         0, & {\rm otherwise}, 
    \end{array} \right. \quad \Omega_f(0)=0, \\
    & \dot{\varrho}_f = \left\{ \begin{array}{cl}
        \varrho(x,u), & \Vert \Omega_f \Vert \leq \overline{\Omega}_f, \\
         0, & {\rm otherwise}, 
    \end{array} \right. \quad \varrho_f(0)=0, \\
    & \dot{\Psi}_f = \left\{ \begin{array}{cl}
        \Omega^\top\big( x(t) - x(0) - \varrho_f \big), & \Vert \Omega_f \Vert \leq \overline{\Omega}_f, \\
         0, & {\rm otherwise}, 
    \end{array} \right. \quad \Psi_f(0)=0,
\end{flalign}
where $\overline{\Omega}_f$ is a user-defined upper bound for the immediate integral variable $\Omega_f$. The update law for the estimated parameters is given by:
\begin{flalign}\label{adaptation law for parameters}
    \dot{\hat{\theta}} = \Gamma_{\theta}\Omega_f^\top\big( x(t) - x(0) - \varrho_f \big),\quad \hat{\theta}(0) = \theta_0,
\end{flalign}
where $\Gamma_{\theta}>0$ is a constant adaptation gain matrix.

\begin{assumption}\label{pe theta}
(Finite excitation condition, \cite{mahmud2021safe}) There exist a time instant $T_{\theta}>0$ and a strictly positive constant $a>0$ such that $\lambda_{\min}\left(\Omega_f^\top(t)\Omega_f(t)\right) > a >0$ for all $t>T_{\theta}$. 
\end{assumption}
Define the estimated error vector as $\tilde{\theta}\triangleq \theta - \hat{\theta}$. Let $V_{\theta} = \frac{1}{2}\tilde{\theta}^\top \Gamma_{\theta}^{-1}\tilde{\theta}$ be a Lyapunov candidate function. According to \cite{mahmud2021safe}, suppose that Assumption \ref{pe theta} holds, then update law \eqref{adaptation law for parameters} ensures convergence of the parameter estimation error, with the conclusion $\dot{V}_{\theta} \leq -a\left\Vert \tilde{\theta} \right\Vert^2$.

\begin{remark}
Note that the regression vector $\Omega_f$ grows over time. \cite{mahmud2021safe} employs a bounded update method, halting updates once $\Vert \Omega_f \Vert$ reaches a preset limit $\overline{\Omega}_f$, which may trap $\hat{\theta}$ in a local optimum. We introduce a refresh mechanism: if $\Vert \Omega_f \Vert$ hits the bound at $t_{\rm re}$, integration resets from $t_{\rm re}$:
\begin{flalign}
& \dot{\Omega}_{\rm re,f} = \left\{ \begin{array}{cl}
        \Omega_{\rm re}^\top(t)\Omega_{\rm re}(t), & \Vert \Omega_{\rm re,f} \Vert \leq \overline{\Omega}_f, \\
         0, & {\rm otherwise}, 
    \end{array} \right. \quad \Omega_{\rm re,f}(t_{\rm re})=0, \\
& \dot{\Psi}_{\rm re,f} = \left\{ \begin{array}{cl}
       \Omega_{\rm re}^\top(t)\big( x(t) - x(t_{\rm re}) - \varrho_{\rm re}(t) \big), & \Vert \Omega_{\rm re,f} \Vert \leq \overline{\Omega}_f, \\
         0, & {\rm otherwise}, 
    \end{array} \right. \quad \Psi_{\rm re,f}(0)=0,
\end{flalign}
where $\Omega_{\rm re}(t)\triangleq\Omega(t) - \Omega(t_{\rm re})$ and $\varrho_{\rm re}(t)\triangleq \varrho(x(t),u(t)) - \varrho(x(t_{\rm re}),u(t_{\rm re}))$. This technique will be demonstrated in the numerical validation section.
\end{remark}

\subsection{Value function approximation}
We employ StaF kernels \cite{kamalapurkar2016efficient,deptula2021approximate} to approximate the optimal SVF $V_{\lambda}^*$ corresponding to the self-triggered HJB equation \eqref{self-triggered HJB}. The crux of using StaF kernels lies in the fact that we only need to perform a local approximation of the value function around the current state $x$ to achieve satisfactory results. Moreover, as a local approximation method, it significantly reduces the number of basis functions required and enables efficient online learning with reduced computational overhead. Specifically, the local approximation kernel functions are defined with state-following centers, $\upsilon:\mathcal{X}\to\mathcal{X}$, which satisfy $\upsilon(x)\in B_r(x)$. Hence, the optimal SVF associated with \eqref{self-triggered HJB} can be described as
\begin{flalign}\label{st}
& V_{\lambda}^*(x_l) = W_s(x)^\top \sigma(x_l,\upsilon(x))+\epsilon_{\upsilon}(x,x_l),\quad x_l\in \mathcal{B}_r(x), 
\end{flalign}
where $W_s : \mathcal{X}\to\mathbb{R}^L$ is the local ideal weight function, $\sigma : \mathcal{X}\times\mathcal{X}^L\to\mathbb{R}^L$ is the vector of local approximation kernels, and $\epsilon_{\upsilon} : \mathcal{X}\times\mathcal{X}\to\mathbb{R}$ is the corresponding approximation error. 

Given that the local ideal weight $W_s(x)$ is unknown in \eqref{st}, we can represent an approximate safety-embedded value function as $\hat{V}_{\lambda}$:
\begin{flalign}\label{es}
&\hat{V}_{\lambda}(x_l,x,\hat{W}_s) = \hat{W}_s^\top \sigma(x_l,\upsilon(x)), 
\end{flalign}
where $\hat{W}_s\in\mathbb{R}^L$ denotes the estimated critic weight vector. Further, we can obtain the corresponding approximate optimal safe control law as
\begin{flalign}\label{es-u}
\hat{u}_{\lambda}(x,\hat{\theta},\hat{W}_s) = R^{-1}\left[ \hat{\lambda}_d\left(x,\hat{\theta},\hat{W}_s\right)\mathcal{L}_{\rho} s(x)^\top - \rho(x)^\top \nabla \sigma(x,\upsilon(x))^\top \hat{W}_s \right], 
\end{flalign}
where $\hat{\lambda}_d\left(x,\hat{\theta},\hat{W}_s\right)$ is an estimate of the Lagrange multiplier in \eqref{lamda-d} and can be described as
\begin{flalign}\label{es-lam}
& \hat{\lambda}_d\left(x,\hat{\theta},\hat{W}_s\right)=\max\left( -\frac{\nu_d\left(x,\hat{\theta},\hat{u}_{\rm no}(x)\right)} {R_{s\rho}(x)}, 0\right), 
\end{flalign}
with $\hat{u}_{\rm no}(x) = -R^{-1}\rho(x)^\top \sigma(x,\upsilon(x))^\top \hat{W}_s$.
\begin{remark}
Note that under the self-triggered mechanism, $\hat{\lambda}_d\left(\breve{x}_j,\hat{\theta},\hat{W}_s\right)$ is computed based on sampled states $\breve{x}_j$ and maintained via a zero-order holder (ZOH). The corresponding controller under ZOH maintenance can be denoted as $\hat{u}_d = \left\{ \begin{array}{ll}
    \hat{u}_{\lambda}\left(x(t_j),\hat{\theta}(t_j),\hat{W}_s(t_j)\right), & t\in[t_j,t_{j+1}) \\
    \hat{u}_{\lambda}\left(x(t_{j+1}),\hat{\theta}(t_{j+1}),\hat{W}_s(t_{j+1})\right), & t = t_{j+1}
\end{array} \right. .$
\end{remark}
For $t\in[t_j,t_{j+1})$, substituting the estimated vectors $\hat{\theta}$, $\hat{W}_s$, and $\hat{u}_d(t_j)$ into the constrained HJB equation \eqref{self-triggered HJB} results in the Bellman error (BE)
\begin{flalign}\label{be}
\delta_t\left(x,\hat{\theta},\hat{W}_s,\hat{u}_d\right) = &\nabla \hat{V}_{\lambda}(x,\hat{W}_s)\left(\hat{\zeta}\left(x,\hat{\theta}\right)+\rho(x)\hat{u}_{d}(t_j) \right) + l\left(x,\hat{u}_{d}(t_j)\right). 
\end{flalign}
Using the ``Bellman extrapolation" \cite{kamalapurkar2016efficient} and ``experience replay" \cite{yang2019adaptive} techniques, we evaluate the BE at a set of historical sampled states $\{x_{ti}\in\mathbb{R}^n:x_{ti}=x(t_i),t_i\in[t_j,t),i=1,\dots,N \}$, as
\begin{flalign}
\delta_{ti}\left(x_{ti},\hat{W}_s,\hat{\theta},\hat{u}_d\right)\triangleq &\nabla \hat{V}_{\lambda}\left(x_{ti},\hat{W}_s\right)\left(\hat{\zeta}\left(x_{ ti},\hat{\theta}\right)+\rho(x_{ti})\hat{u}_d(t_j) \right) + l\left(x_{ ti},\hat{u}_d(t_j)\right). 
\end{flalign}

\begin{remark}
Note that the approximate optimal safe control policy, as given by \eqref{es-u}, satisfies the self-triggered RCBF-based safety constraint \eqref{nu_d}, thereby ensuring system safety. Additionally, we can conclude that $\hat{\lambda}_d\left(x,\hat{\theta},\hat{W}_s\right)\cdot\nu_d\left(x,\hat{\theta},\hat{u}_d\right)=0$, and as a result, we disregard this term in the BE error, as shown in \eqref{be}. 
\end{remark} 

In line with the recursive least-squares-based method, the estimated critic weight vector $\hat{W}_s$ is updated to minimize the accumulated error $E(t) = \int_{t_0}^t \left( \delta_t^2(\tau) + \sum_{i=1}^N\delta_{ti}^2(\tau) \right)d\tau $, as
\begin{subequations}
\begin{align}
&\dot{\hat{W}}_s(t) = - k_{c1}\Gamma(t)\frac{\xi(t)}{\iota(t)}\delta_t -\frac{k_{c2}}{N}\Gamma(t)\sum_{i=1}^N \frac{\xi_i(t)}{\iota_i(t)}\delta_{ti}, \label{critic law}\\[3pt]
&\dot{\Gamma}(t) = \beta\Gamma(t) - k_{c1}\Gamma(t)\frac{\xi(t)\xi(t)^\top}{\iota(t)^2}\Gamma(t) - \frac{k_{c2}}{N}\Gamma(t)\sum_{i=1}^N \frac{\xi_i(t)\xi_i(t)^\top}{\iota_i(t)^2}\Gamma(t), \label{Gamma law}
\end{align}
\end{subequations}
where $\xi(t)=\nabla\sigma\left( \hat{\zeta} + \rho\hat{u}_d \right)$ is the regressor vector, $\iota(t):=\sqrt{1+\gamma \xi(t)^\top \xi(t)}$ is the normalized term, $k_{c1}, k_{c2}>0$ and $\beta>0$ are user-defined learning rate and forgetting constant, respectively.

\subsection{Stability analysis}
To guarantee the convergence of the true states $x$, the estimated parameters $\hat{\theta}$ and critic weights $\hat{W}_s$, the following excitation conditions, boundedness assumptions and Lipschitz continuity of the optimal controller are established.
\begin{assumption}\label{as2}
\cite{kamalapurkar2016efficient} There exists a strictly positive constant $b\in\mathbb{R}_{> 0}$ such that for all $t\ge 0$, 
\begin{flalign}
0< b \leq \inf_{t\ge 0} \left\{\lambda_{\min}\left( \frac{k_{c1}\xi(t)\xi(t)^\top}{\iota(t)^2}+\frac{k_{c2}}{N} \sum_{i=1}^N\frac{\xi_i(t)\xi_i(t)^\top}{\iota_i(t)^2} \right)\right\} 
\end{flalign}
\end{assumption}
\begin{remark}
This is the relaxed rank-like persistently exciting (PE) condition induced by the concurrent learning technique. It can be observed that by incorporating additional sampled state trajectories for BE extrapolation, the above rank-like PE condition can be more readily satisfied. Moreover, under this relaxed PE condition, the least-squares matrix gain $\Gamma$ is bounded as $\underline{\Gamma} \mathbb{I}_L \leq \Gamma(t) \leq \overline{\Gamma} \mathbb{I}_L$, where $\overline{\Gamma} > \underline{\Gamma}>0$.
\end{remark}

\begin{thm}
Suppose that the matrix $B = \left[\begin{array}{ccc}
        \mu\lambda_{\min}(Q)\mathbb{I}_n  & \textbf{0}_{n\times l}  & \textbf{0}_{n\times p} \\
        \textbf{0}_{l\times n} & \vartheta_c & \vartheta_{c\theta} \\
        \textbf{0}_{p\times n} & \vartheta_{c\theta}^T  & a\mathbb{I}_p 
\end{array}\right]$
% \begin{flalign}\label{matrix B}
% & B = \left[\begin{array}{ccc}
%         \mu\lambda_{\min}(Q)\mathbb{I}_n  & \textbf{0}_{n\times l}  & \textbf{0}_{n\times p} \\
%         \textbf{0}_{l\times n} & \vartheta_c & \vartheta_{c\theta} \\
%         \textbf{0}_{p\times n} & \vartheta_{c\theta}^T  & a\mathbb{I}_p 
% \end{array}\right], 
% \end{flalign}
with $\vartheta_c =\left( \frac{b}{2} + \frac{\beta}{2\overline{\Gamma}} \right) + 4\left\Vert R^{-1}\right\Vert\overline{\left\Vert \tilde{\lambda}\mathcal{L}_{\rho}s \right\Vert}^2$ and $\vartheta_{c\theta} = k_{c1}\frac{\xi}{2\iota}W_s^\top\nabla\sigma \omega + \frac{k_{c2}}{2N}\sum_{i=1}^N\frac{\xi_i}{\iota_i}W_s^\top\nabla \sigma_i \omega_i$ is strictly positive definite and Assumption \ref{pe theta}, \ref{as2} hold. Under the self-triggered mechanism \eqref{self}, online estimation methods for parametric uncertainties \eqref{adaptation law for parameters}, critic weights \eqref{critic law}-\eqref{Gamma law} and Lagrange multipliers \eqref{es-lam}, the estimation error of critic weights ($\tilde{W}_s$) of the approximate optimal SVF \eqref{es}, and the system state ($x$) driven by the approxiamte optimal safe control policy \eqref{es-u} are uniformly ultimately bounded (UUB).
\end{thm}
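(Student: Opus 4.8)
The plan is to construct a composite Lyapunov function bundling the three quantities whose boundedness is asserted---the plant state $x$, the critic weight error $\tilde{W}_s\triangleq W_s(x)-\hat{W}_s$, and the identifier error $\tilde{\theta}$---and to show that along the closed-loop sampled-data trajectories its derivative is strictly negative outside a compact neighborhood of the origin, whence UUB follows from \cite[Theorem 4.18]{hassan2002nonlinear}. A natural candidate is
\begin{equation*}
\mathcal{V}(x,\tilde{W}_s,\tilde{\theta},t) = V_{\lambda}^{*}(x) + \tfrac{1}{2}\tilde{W}_s^{\top}\Gamma(t)^{-1}\tilde{W}_s + V_{\theta}(\tilde{\theta}),
\end{equation*}
where the first term certifies plant stability, the second the critic recursion, and the third the identifier. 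Since Assumption~\ref{as2} keeps $\underline{\Gamma}\mathbb{I}_L\leq\Gamma(t)\leq\overline{\Gamma}\mathbb{I}_L$, the middle term is a legitimate time-varying norm equivalent to $\Vert\tilde{W}_s\Vert^2$, and differentiating it via $\frac{d}{dt}(\Gamma^{-1})=-\Gamma^{-1}\dot{\Gamma}\Gamma^{-1}$ together with \eqref{Gamma law} produces the useful term $-\tfrac{\beta}{2}\tilde{W}_s^{\top}\Gamma^{-1}\tilde{W}_s\leq-\tfrac{\beta}{2\overline{\Gamma}}\Vert\tilde{W}_s\Vert^2$.

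For the plant term I would repeat the computation behind the proof of Theorem~\ref{etm thm}, now with the \emph{approximate} safe controller $\hat{u}_d$ of \eqref{es-u} in place of the exact $u_{\lambda}^{*}$. Differentiating $V_{\lambda}^{*}$ along $\dot{x}=\zeta(x,\theta)+\rho(x)\hat{u}_d(t_j)$, using $\nabla V_{\lambda}^{*}\rho=\lambda_d^{*}\mathcal{L}_{\rho}s-u_{\lambda}^{*\top}R$, and decomposing the control mismatch $\hat{u}_d(t_j)-u_{\lambda}^{*}(x)$ into a sampling part governed by the self-triggering rule \eqref{self}, a critic part proportional to $\nabla\sigma\,\tilde{W}_s$, a parameter part proportional to $\omega\tilde{\theta}$, a multiplier part proportional to $\tilde{\lambda}_d\mathcal{L}_{\rho}s$ (with $\tilde{\lambda}_d\triangleq\lambda_d^{*}-\hat{\lambda}_d$), and a StaF residual proportional to $\nabla\epsilon_{\upsilon}$, then completing squares via Young's inequality, yields $\dot{V}_{\lambda}^{*}\leq-\mu\lambda_{\min}(Q)\Vert x\Vert^2 + c_W\Vert\tilde{W}_s\Vert^2 + c_{\theta}\Vert\tilde{\theta}\Vert^2 + 4\Vert R^{-1}\Vert\,\overline{\Vert\tilde{\lambda}_d\mathcal{L}_{\rho}s\Vert}^2 + C_x$, the coefficient of $\Vert x\Vert^2$ being $\mu$ rather than $1$ precisely because \eqref{self} is invoked to absorb the $2d_v^2\Vert R\Vert\Vert\breve{e}_j\Vert^2$ term exactly as in the proof of Theorem~\ref{etm thm}. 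For the identifier term, Assumption~\ref{pe theta} and the cited result of \cite{mahmud2021safe} give $\dot{V}_{\theta}\leq-a\Vert\tilde{\theta}\Vert^2$ directly.

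The crux is the critic term. I would use the self-triggered HJB \eqref{self-triggered HJB} to express the Bellman errors $\delta_t$ and $\delta_{ti}$ as affine functions of $\tilde{W}_s$ plus residuals that collect the StaF approximation error $\epsilon_{\upsilon}$ and its gradient, the drift $\dot{W}_s(x)$ of the ideal weight (bounded on the compact operating region through $\Vert\dot{x}\Vert$), the parameter error entering via $\hat{\zeta}=\omega\hat{\theta}$, and the multiplier error $\tilde{\lambda}_d$ (Lipschitz in its arguments since \eqref{lamda-d} is a $\max$ of a ratio whose denominator $R_{s\rho}$ is bounded away from zero on $\mathcal{X}$ when $\partial s/\partial x\neq0$). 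Substituting the update laws \eqref{critic law}--\eqref{Gamma law} into $\tfrac{d}{dt}\big(\tfrac{1}{2}\tilde{W}_s^{\top}\Gamma^{-1}\tilde{W}_s\big)$ and invoking the rank-like persistently exciting condition of Assumption~\ref{as2} delivers a negative-definite term in $\tilde{W}_s$ whose magnitude is controlled by the PE constant $b$; combined with the $-\tfrac{\beta}{2\overline{\Gamma}}\Vert\tilde{W}_s\Vert^2$ contribution, the cross term $\tilde{W}_s^{\top}\vartheta_{c\theta}\tilde{\theta}$ produced by $\hat{\zeta}=\omega\hat{\theta}$, and the safety-correction penalty $4\Vert R^{-1}\Vert\overline{\Vert\tilde{\lambda}_d\mathcal{L}_{\rho}s\Vert}^2$, this reproduces exactly the entries $\vartheta_c$ and $\vartheta_{c\theta}$ appearing in the block matrix $B$, the remaining residuals being absorbed into an additive constant $C_W$.

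Adding the three contributions, the indefinite part of $\dot{\mathcal{V}}$ takes the quadratic form $-z^{\top}Bz$ with $z=[\Vert x\Vert,\ \tilde{W}_s^{\top},\ \tilde{\theta}^{\top}]^{\top}$ and $B$ the matrix in the statement, plus a total residual constant $C=C_x+C_W$ gathering $\overline{\Vert\epsilon_{\upsilon}\Vert}$, $\overline{\Vert\nabla\epsilon_{\upsilon}\Vert}$, $\overline{\Vert\lambda^{*}\mathcal{L}_{\rho}s\Vert}$, the StaF Taylor remainders and the sampling-induced bounds. Strict positive definiteness of $B$ then gives $\dot{\mathcal{V}}\leq-\lambda_{\min}(B)\Vert z\Vert^2+C$, which is negative whenever $\Vert z\Vert>\sqrt{C/\lambda_{\min}(B)}$, establishing UUB of $x$ and $\tilde{W}_s$ (hence of $\hat{W}_s$), and as a byproduct of $\tilde{\theta}$. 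The step I expect to be the main obstacle is the bookkeeping that makes the argument self-consistent: one must confine the trajectory a priori to a compact set $\mathcal{X}$ so that the StaF centers satisfy $\upsilon(x)\in\mathcal{B}_r(x)$ and all Lipschitz constants and sup-norms are finite, and then verify that the sampled-data error $\breve{e}_j$ permitted by \eqref{self} genuinely keeps the control- and multiplier-mismatch terms within the bounds used above---this tight coupling between the triggering analysis of Section~\ref{sec3} and the self-triggered mechanism on one side and the learning analysis here on the other is precisely what forces the intricate definition of $\vartheta_c$.
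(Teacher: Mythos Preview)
Your proposal is correct and follows essentially the same route as the paper: composite Lyapunov function $V_{\lambda}^{*}(x)+\tfrac{1}{2}\tilde{W}_s^{\top}\Gamma^{-1}\tilde{W}_s+V_{\theta}(\tilde{\theta})$, differentiate $V_{\lambda}^{*}$ along the closed loop using the constrained HJB and the decomposition $\nabla V_{\lambda}^{*}\rho=\lambda_d^{*}\mathcal{L}_{\rho}s-u_{\lambda}^{*\top}R$, bound the control mismatch by Young's inequality and the trigger rule, rewrite the Bellman error as an affine function of $(\tilde{W}_s,\tilde{\theta})$ plus bounded residuals, invoke the PE constant $b$ and the forgetting factor $\beta$ for the critic, and assemble everything into the quadratic form $-z^{\top}Bz$.

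Two minor departures from the paper's execution are worth flagging. First, the paper augments the Lyapunov candidate with an additional term $V_{\lambda}^{*}(\breve{x}_j)$ evaluated at the \emph{sampled} state and then performs a two-case analysis: Case~I on the flow $t\in[t_j,t_{j+1})$ (where $\dot{V}_{\lambda}^{*}(\breve{x}_j)=0$) and Case~II at the jump $t=t_{j+1}$ (where $\Delta\mathcal{V}=V_{\lambda}^{*}(\breve{x}_{j+1})-V_{\lambda}^{*}(\breve{x}_j)\leq 0$). Your three-term candidate is continuous in $t$ since $x,\hat{W}_s,\hat{\theta}$ all evolve continuously, so the two-case split is not strictly needed; the paper's version is the standard sampled-data pattern but buys no extra conclusion here. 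Second, the paper does \emph{not} absorb the ideal-weight drift $\tilde{W}_s^{\top}\Gamma^{-1}\nabla W_s(x)\big(\zeta+\rho\hat{u}_d\big)$ into the residual constant as you propose; instead it keeps it as a linear term $\varsigma^{\top}z$ and completes the square to obtain $\dot{\mathcal{V}}\leq -\lambda_{\min}(B)\Vert z\Vert^2+\varsigma^{\top}z+C$, leading to the ultimate bound $\Vert z\Vert>\sqrt{C/\lambda_{\min}(B)}+\Vert\varsigma\Vert/(2\lambda_{\min}(B))$. Either treatment yields UUB; the paper's is marginally sharper because it separates the state-dependent drift from the fixed approximation residuals.
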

\begin{proof}
Let $Z = (x^\top, \breve{x}^\top,\tilde{W}_s^\top,\tilde{\theta}^\top)^\top$ be the extended state. Consider the following Lyapunov function candidate
\begin{flalign}
& \mathcal{V}(Z) = V_{\lambda}^*(x) + V_{\lambda}^*(\breve{x}_j) + V_c(\tilde{W}_s) + V_{\theta}(\tilde{\theta}), 
\end{flalign}
where $V_{\lambda}^*(x)$ and $V_{\lambda}^*(\breve{x}_j)$ are optimal SVF for the time-triggered and self-triggered systems, $V_c(\tilde{W}_s) = \frac{1}{2}\tilde{W}_s^\top \Gamma^{-1}(t)\tilde{W}_s$, and $V_{\theta}(\tilde{\theta})=\frac{1}{2}\tilde{\theta}^\top \Gamma_{\theta}^{-1}(t)\tilde{\theta}$. 

Case \uppercase\expandafter{\romannumeral1}: Let $t_j\leq t < t_{j+1}$. Note that during the triggering interval, $\dot{V}_{\lambda}^*(\breve{x}_j)=0$. Taking time derivative of $V_{\lambda_d}^*(x)$ along the solution of 
\begin{flalign}
\dot{V}_{\lambda}^* &= \nabla V_{\lambda}^*\left(\zeta(x,\theta)+\rho(x)\hat{u}_{d}(t_j) \right) \notag\\
&= \nabla V_{\lambda}^*\left(\zeta(x,\theta)+\rho(x) u_{\lambda}^*\left(x,\lambda_d^*(x)\right) \right) + \nabla V_{\lambda}^* \rho \left( \hat{u}_{d}(t_j)-u_{\lambda}^*\left(x,\lambda_d^*(x)\right) \right) \notag \\
&= -x^TQx - \frac{1}{2}u_{\lambda}^{*\top}Ru_{\lambda}^* +\nabla V_{\lambda}^* \rho \left( \hat{u}_{d}(t_j)-u_{\lambda}^*\left(x,\lambda_d^*(x)\right) \right). \label{v-self}  
\end{flalign}
Note that $\nabla V_{\lambda}^*\rho = \lambda_d^*\mathcal{L}_{\rho} s - u_{\lambda}^{*\top} R$. We can obtain that
\begin{flalign}
\nabla V_{\lambda}^*\rho( \hat{u}_{d} &- u_{\lambda}^*) =\underbrace{\lambda_d^* \mathcal{L}_{\rho} s\big( (\hat{u}_d - u_d^*) + (u_d^* - u_{\lambda}^*) \big)}_{\Upsilon_3} \notag\\
&\underbrace{ -u_{\lambda}^{*\top}R\big( (\hat{u}_d - u_d^*) + (u_d^* - u_{\lambda}^*) \big) }_{\Upsilon_4}. 
\end{flalign}
Using Young's inequality, we have that
\begin{flalign}
& \Upsilon_3 \leq \frac{1}{2}\left\Vert \lambda_d^* \mathcal{L}_{\rho} s R^{-\frac{1}{2}}\right\Vert^2 + \frac{1}{2}\left\Vert R \right\Vert \left\Vert (\hat{u}_d - u_d^*) + (u_d^* - u_{\lambda}^*) \right\Vert^2, \label{up3}\\
& \Upsilon_4 \leq \frac{1}{2}u_{\lambda}^{*\top}Ru_{\lambda}^* + \frac{1}{2}\Vert R \Vert \left\Vert (\hat{u}_d - u_d^*) + (u_d^* - u_{\lambda}^*) \right\Vert^2. \label{up4}
\end{flalign}
According to \eqref{mkl}, we have
\begin{flalign}\label{u-ul}
&\left\Vert u_{d}^*\big(\breve{x},\lambda_d^*(\breve{x})\big) - u_{\lambda}^*\big(x,\lambda_d^*(x)\big) \right\Vert^2 \leq 8\overline{\big\Vert \lambda_d^*(x)\mathcal{L}_{\rho}s(x) R^{-1}\big\Vert}^2 + 2 d_v^2 \Vert \breve{e} \Vert^2
\end{flalign}
In addition, we have
\begin{flalign}\label{ud-u}
&\Vert \hat{u}_d-u_d^* \Vert^2 = \Vert R^{-1}\rho^\top\nabla\sigma^\top\tilde{W}_s - R^{-1}\tilde{\lambda}_d \mathcal{L}_{\rho} s^\top - \epsilon_u \Vert^2 \notag\\
& \leq 2 \Vert R^{-1} \Vert^2 \overline{\Vert \rho\nabla\sigma \Vert}^2 \Vert \tilde{W}_s \Vert^2 + 2 \Vert R^{-1} \Vert^2 \overline{\Vert \tilde{\lambda}_d \mathcal{L}_{\rho} s \Vert}^2 + 2\overline{\Vert \epsilon_u \Vert}^2 
\end{flalign}

Substituting \eqref{up3}-\eqref{ud-u} into \eqref{v-self} yields
\begin{flalign}
& \dot{V}_{\lambda}^*(x)\leq -\mu\lambda_{\min}(Q)\Vert x \Vert^2 + 4\Vert R^{-1}\Vert \overline{\Vert \tilde{\lambda}_d \mathcal{L}_{\rho} s \Vert}^2 \Vert \tilde{W}_s \Vert^2 + C_2, 
\end{flalign}
where $C_2 = 4\Vert R^{-1} \Vert \overline{ \Vert \tilde{\lambda}_d\mathcal{L}_{\rho} s \Vert }^2 + \frac{33}{2}\overline{\Vert \lambda_d^* \mathcal{L}_{\rho} s R^{-\frac{1}{2}}\Vert}^2 + 4 \Vert R \Vert  \overline{\Vert \epsilon_u \Vert}^2$.

The BE in \eqref{be} can be further rewritten in terms of the weight estimation error vector $\tilde{W}_s = W - \hat{W}_d$, $\tilde{\theta} = \theta - \hat{\theta}$, and Lagrange multiplier $\hat{\lambda}$ as
\begin{flalign}
\delta_t =& -\xi^\top\tilde{W}_s - W_s^\top\nabla \sigma \omega \tilde{\theta} + \hat{\lambda}_d\mathcal{L}_{\rho} s R^{-1} \rho^\top \nabla \sigma^\top \tilde{W}_s \notag\\
& + \hat{\lambda}_d^2 \mathcal{L}_{\rho} s R^{-1}\mathcal{L}_{\rho} s^\top + \Delta_{\epsilon}(x), 
\end{flalign}
where the function $\Delta_{\epsilon} : \mathcal{X}\to \mathbb{R}$ is uniformly bounded over $\mathcal{X}$ such that the bound $\overline{\Vert \Delta_{\epsilon} \Vert}$ decreases with decreasing $\overline{\Vert \breve{e}_j \Vert}$, $\overline{\Vert \epsilon_u \Vert}$ and $\overline{\Vert \nabla W_s \Vert}$. Note that $\hat{\lambda}$ only plays a role when the state is close to the safe set, and as the state gradually converges to the equilibrium point, $\hat{\lambda}$ also converges to zero. Therefore, the terms related to $\hat{\lambda}_d$ can be regarded as bounded small quantities. Then, we can rewrite the BE as
\begin{flalign}
\delta_t =& -\xi^\top\tilde{W}_s - W_s^\top\nabla \sigma \omega \tilde{\theta} + \Delta(x), 
\end{flalign}
Then, we take the time derivative of the second term in the Lyapunov candidate $V_c(\tilde{W}_s)$ as
\begin{flalign}
\dot{V}_c(\tilde{W}_s) &= \tilde{W}_s^\top \Gamma^{-1}(\dot{W}_s-\dot{\hat{W}_s}) - \frac{1}{2}\tilde{W}_s^\top \Gamma^{-1}\dot{\Gamma}\Gamma^{-1} \tilde{W}_s \notag \\ 
& = \tilde{W}_s^\top \Gamma^{-1} \nabla W_s(x) \left( \zeta(x,\theta) + \rho(x)\hat{u}_d \right) \notag \\ 
&\quad -\tilde{W}_s^\top \Gamma^{-1} \Big( - k_{c1}\Gamma\frac{\xi}{\iota}(-\xi^\top\tilde{W}_s-W_s^\top\nabla \sigma \omega \tilde{\theta} + \Delta) \notag\\
&\qquad\quad-\frac{k_{c2}}{N}\Gamma\sum_{i=1}^N \frac{\xi_i}{\iota_i}(-\xi_i^\top\tilde{W}_s-W_s^\top\nabla \sigma_i \omega_i \tilde{\theta} + \Delta_i) \Big) \notag\\
&\quad - \frac{1}{2}\tilde{W}_s^\top \Gamma^{-1} \Big( \beta\Gamma - k_{c1}\Gamma\frac{\xi\xi^\top}{\iota^2}\Gamma - \frac{k_{c2}}{N}\Gamma\sum_{i=1}^N \frac{\xi_i\xi_i^\top}{\iota_i^2}\Gamma \Big) \Gamma^{-1}\tilde{W}_s. 
\end{flalign}
Using the fact $\iota < \iota^2$, we can further simplify the above equation as
\begin{flalign}
\dot{V}_c \leq &\tilde{W}_s^\top \left( -\frac{\beta}{2}\Gamma^{-1}-\frac{k_{c1}}{2}\frac{\xi\xi^\top}{\iota^2}  - \frac{k_{c2}}{2N}\sum_{i=1}^N \frac{\xi_i\xi_i^T}{\iota_i^2}\right) \tilde{W}_s \notag \\
& - \tilde{W}_s^\top \left( k_{c1}\frac{\xi}{\iota}W_s^\top \nabla \sigma\omega + \frac{k_{c2}}{N}\sum_{i=1}^N \frac{\xi_i}{\iota_i}W_s^\top \nabla\sigma_i \omega_i \right) \tilde{\theta} \notag \\
& + \tilde{W}_s^\top \Gamma^{-1} \nabla W_s(x) \left( \zeta(x,\theta) + \rho(x)\hat{u}_d \right) + C_3,
\end{flalign}
where $C_3 = (k_{c1}+k_{c2})\overline{\left\Vert \frac{\xi}{\iota} \right\Vert}\overline{\left\Vert \tilde{W}_s\right\Vert}\overline{\left\Vert \Delta \right\Vert}$.

For the Lyapunov function $V_{\theta}(\tilde{\theta})$ of parameter estimation error vector, we have $\dot{V}_{\theta}\leq -a\left\Vert \tilde{\theta} \right\Vert^2$. Define the vector $\varsigma = \Big( 0,\ \varsigma_2,\ 0 \Big)^\top$, with $\varsigma_2 = \Gamma^{-1} \nabla W_s(x) \left( \zeta(x,\theta) + \rho(x)\hat{u}_d \right)$. Let $z = \left(x^\top, \tilde{W}_s^\top, \tilde{\theta}^\top \right)^\top$. Recall the definition of matrix $B$. Along with the above definition, we can derive that
\begin{flalign}
\dot{\mathcal{V}}(Z) &\leq - z^\top B z + \varsigma^\top\cdot z + C_2 + C_3 \notag \\
&\leq-(\sqrt{\lambda_{\min}(B)}\Vert z \Vert - \frac{\Vert \varsigma \Vert}{2\sqrt{\lambda_{\min}(B)}})^2 + C_2+C_3, 
\end{flalign}
Hence, the time derivative of the Lyapunov candidate $\mathcal{V}$ is negative if $\Vert z \Vert > \sqrt{\frac{C_2+C_3}{\lambda_{\min}(B)}} + \frac{\Vert \varsigma \Vert}{2\lambda_{\min}(B)}$. Using \cite[Theorem 4.18]{hassan2002nonlinear}, we can conclude that the $x$, $\tilde{W}_s$, and $\tilde{\theta}$ are UUB under the approximate self-triggered optimal safe controller $\hat{u}_d\left(x,\hat{W}_s,\hat{\theta}\right)$.

Case \uppercase\expandafter{\romannumeral2}: Let $t = t_{j+1},\ j\in\mathbb{N}$. Then we consider the difference of Lyapunov function candidate, i.e., $\Delta \mathcal{V}=V_{\lambda}^*(\breve{x}_{j+1}) - V_{\lambda}^*(\breve{x}_j)$. As proved in Case \uppercase\expandafter{\romannumeral1}, based on the conclusion that $x(t)$ is UUB, we can derive $V_{\lambda}^*(\breve{x}_{j+1})\leq V_{\lambda}^*(\breve{x}_j)$. Thus, we have $\Delta\mathcal{V}<0$.
\end{proof}

\section{Numerical validation studies}
This section presents comprehensive simulation results validating the proposed adaptive safety-embedded control framework under both time-triggered and self-triggered mechanisms. We consider the nonlinear dynamical system from \cite{jankovic2018robust}, given by $\dot{x}=\omega(x)\theta + \rho(x)u$ with
\begin{flalign}\label{sampled-data}
& \omega(x) = \left[ \begin{array}{ccc}
     x_1 & x_2 & 0 \\
     0 & 0 & x_1^3
\end{array} \right],\quad 
\rho(x) = \left[ \begin{array}{c}
     0  \\
     x_2 
\end{array} \right].
\end{flalign}
The true values of the uncertain parameter vector are $\theta_1 = -0.6,\theta_2 = -1,\theta_3=1$. The evaluation focuses on the the dual objectives of: (i) optimizing the infinite-horizon performance metric \eqref{value function} with $Q = \mathbb{I}_2,R = 1$, and (ii) ensuring asymptotic stability while maintaining strict safety guarantees.

For the approximation of SVF, we employ three kernel functions with adaptive centers, i.e., $\sigma(x,\upsilon(x))=[\sigma_1(x,\upsilon_1), \sigma_2(x,\upsilon_2), \sigma_3(x,\upsilon_3)]^\top$, where each kernel is selected as $\sigma_i(x,\upsilon_i) = e^{x^\top \upsilon_i} - 1,i=1,2,3$. The state-following centers are given by $\upsilon_i = x + b_i(x),i=1,2,3$, where $b_1(x) = 0.7\phi(x)\cdot[0,1]^\top$, $b_2(x) = 0.7\phi(x)\cdot[0.85,-0.6]^\top$, and $b_3(x) = 0.7\phi(x)\cdot[-0.85,-0.6]^\top$, and $\phi(x) = \frac{x^\top x+ 0.01}{1+ x^\top x}$. To maintain persistent excitation, we randomly sample three historical states to calculate the BE.

\subsection{Obstacle-like avoidance}
We apply the time-triggered adaptive safety-embedded optimal control policy to the uncertain system \eqref{sampled-data} with a user-defined obstacle-like safe set $\mathcal{D}=\{x:s(x)=(x_1 + 0.5)^2 + (x_2 + 1.5)^2 - 1.0^2\ge 0\}$. The learning rates and initial conditions of the identifier are set as $\Gamma_{\theta} = 100\mathbb{I}_3$ and $\hat{\theta}(0) = [0,0,0]^\top$, respectively. The upper bound for the intermediate integral variable $\Omega_f$ is set as $\overline{\Omega}_f=20$. The system is initialized as $x(0) = [-2,-3]^\top,\hat{W}_s(0)=0.1\times \textbf{1}_{3\times 1},\Gamma(0) = 10\mathbb{I}_3$, and the learning rates are selected as $k_{c1}=0.001,k_{c2}=0.001,\beta=0.002$. The RCBF \eqref{rcbf} is set with $\alpha_s = 8$ and $\Xi(x)=\frac{\Vert \mathcal{L}_{\omega}s(x) \Vert^2}{5}$.

\begin{figure}[ht]
\centering
\includegraphics[
    width=0.9\textwidth,    % 宽度拉伸至文本行宽
    height=5cm,      % 保持原始高度（不缩放）
    keepaspectratio=false % 关闭等比例缩放
  ]{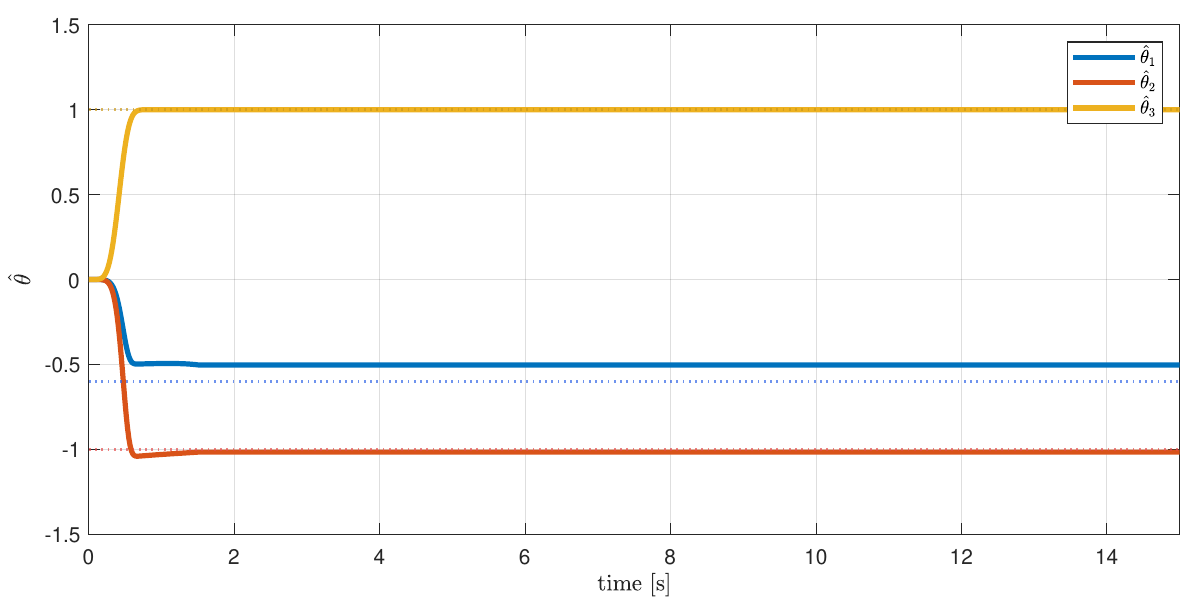}
\caption{Convergence of the estimated parameters $\hat{\theta}$}
\label{fig1}
\end{figure}

\begin{figure}[ht]
\centering
\includegraphics[
    width=0.9\textwidth,    % 宽度拉伸至文本行宽
    height=5cm,      % 保持原始高度（不缩放）
    keepaspectratio=false % 关闭等比例缩放
  ]{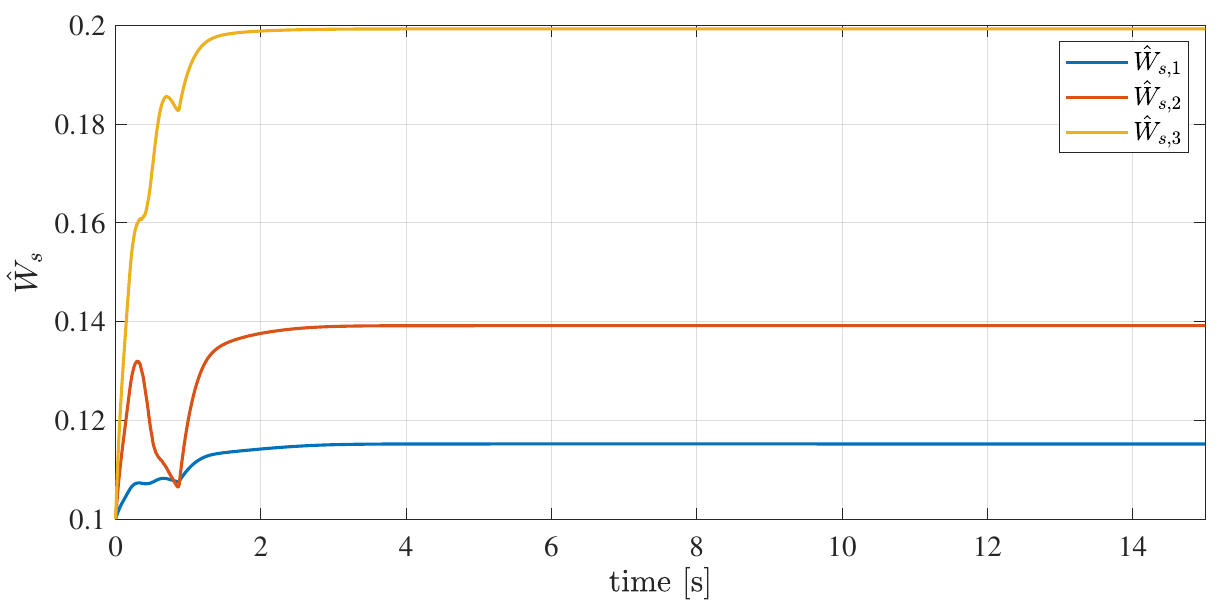}
\caption{Evolution of the approximate critic weights $\hat{W}_s$}
\label{fig2}
\end{figure}

\begin{figure}[ht]
\centering
\includegraphics[
    width=0.9\textwidth,    % 宽度拉伸至文本行宽
    height=5cm,      % 保持原始高度（不缩放）
    keepaspectratio=false % 关闭等比例缩放
  ]{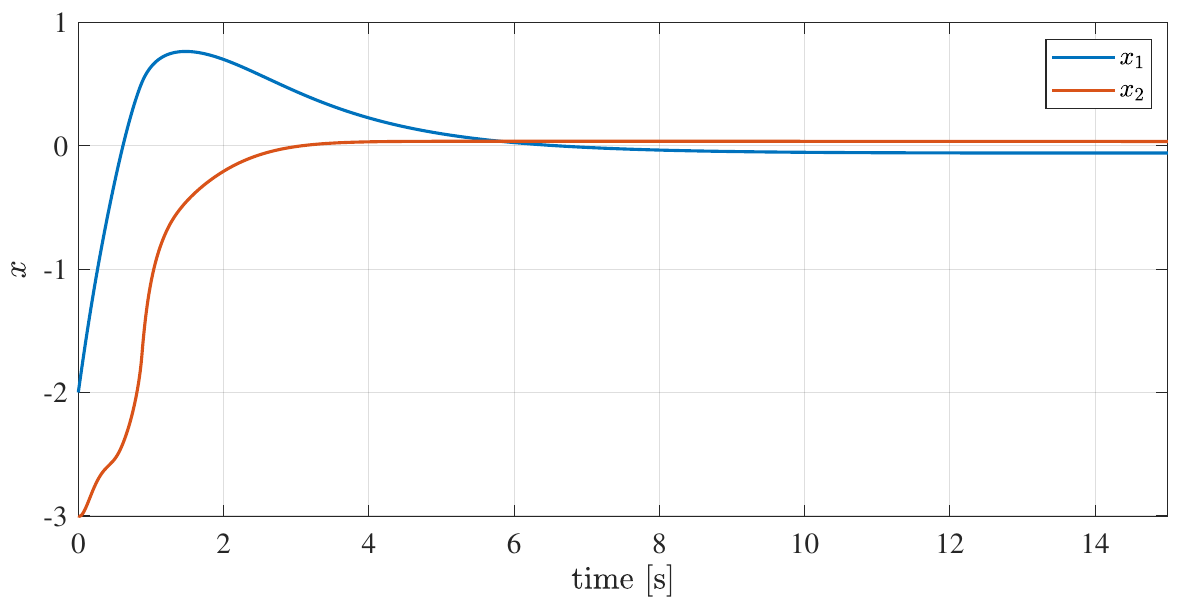}
\caption{Convergence of the state $x$}
\label{fig3}
\end{figure}

\begin{figure}[ht]
\centering
\includegraphics[
    width=0.9\textwidth,    % 宽度拉伸至文本行宽
    height=5cm,      % 保持原始高度（不缩放）
    keepaspectratio=false % 关闭等比例缩放
  ]{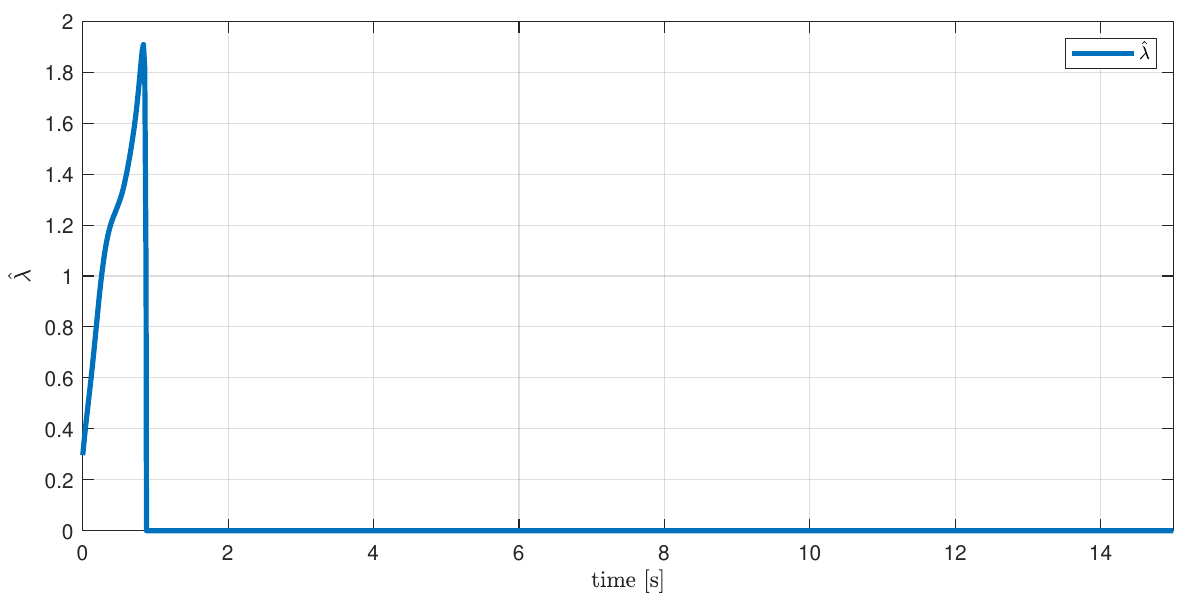}
\caption{Evolution of the estimated Lagrange multiplier $\hat{\lambda}$}
\label{fig4}
\end{figure}

Figure \ref{fig1} shows that the estimated model parameters $\hat{\theta}$ converge to their true values; Figure \ref{fig2} illustrates the evolution of the critic weights $\hat{W}$ for the approximate SVF $\hat{V}_s$; Figure \ref{fig3} demonstrates the convergence of the system states $x$ to the equilibrium point in the time domain; Figure \ref{fig4} presents the evolution of the estimated Lagrange multiplier $\hat{\lambda}$. It can be seen that when the state is close to the obstacle-like safety constraint, $\hat{\lambda}$ plays a significant role in modifying the control policy. When the state crosses the obstacle-like safety constraint, $\hat{\lambda}$ converges to zero.

\begin{figure}[ht]
\centering
\includegraphics[width=.7\columnwidth]{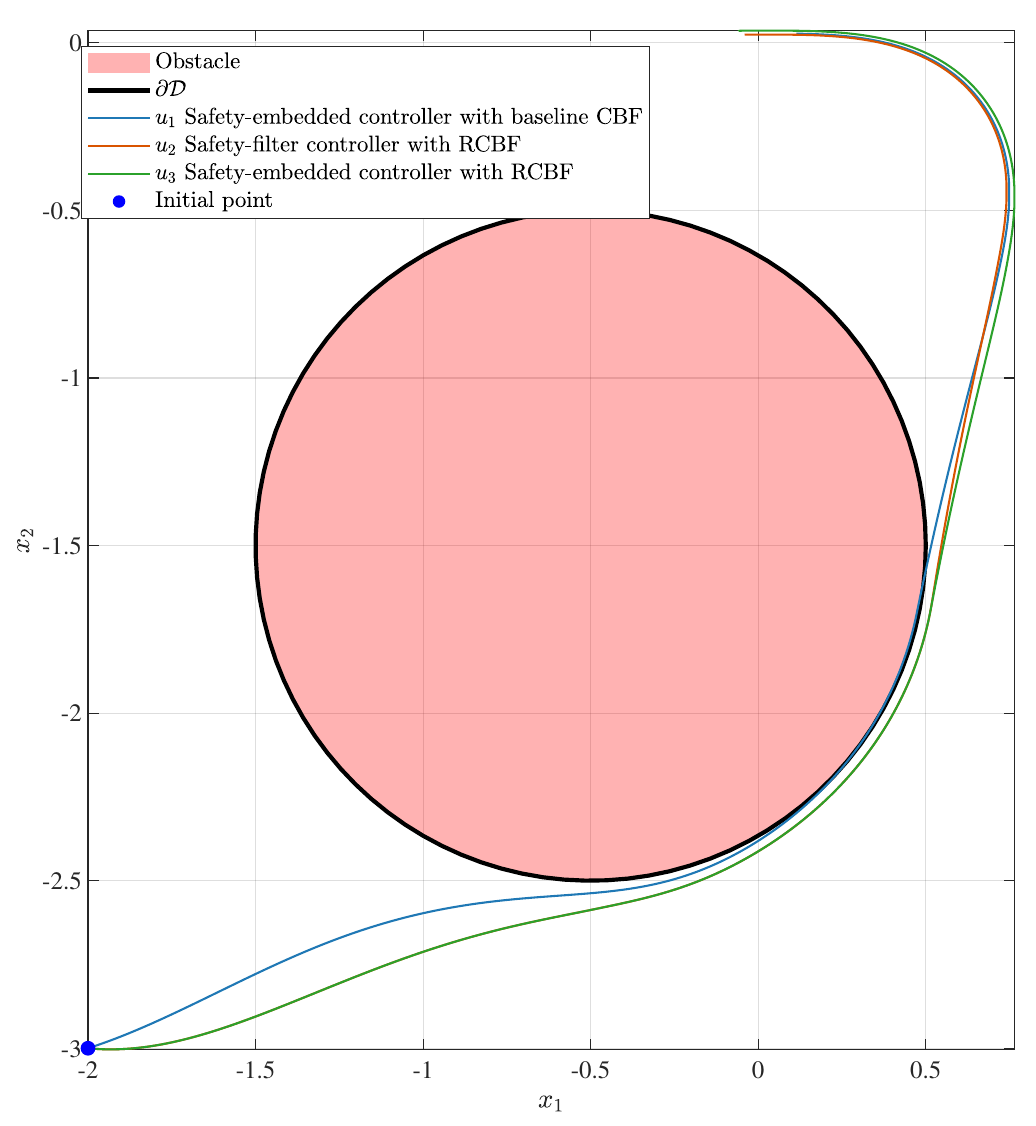}
\caption{The state trajectories under three control policies: the safety-embedded controller with the baseline CBF $u_1$; the safety filter controller with RCBF $u_2$; and the proposed safety-embedded controller with RCBF $u_3$.}
\label{fig5}
\end{figure}

To verify the improved safety assurance capability of the developed safety-embedded adaptive optimal controller, we compare the effects of three types of control policies, and Fig. \ref{fig5} presents the corresponding phase plots of the system states. The green curve represents the state trajectory under the safety-embedded control policy $u_1$ with the baseline CBF-based safety constraint (without the compensation term $\Xi(x)$). The blue curve represents the state trajectory under the adaptive optimal control policy $u_2$ with RCBF being a safety filter (projecting the control policy into the policy space determined by the RCBF constraint). The red curve represents the state trajectory under the proposed safety-embedded control policy $u_3$, which is derived from the constrained HJB equation augmented with the RCBF-based safety constraint. 

\begin{figure}[ht]
\centering
\includegraphics[
    width=0.9\textwidth,    % 宽度拉伸至文本行宽
    height=5cm,      % 保持原始高度（不缩放）
    keepaspectratio=false % 关闭等比例缩放
  ]{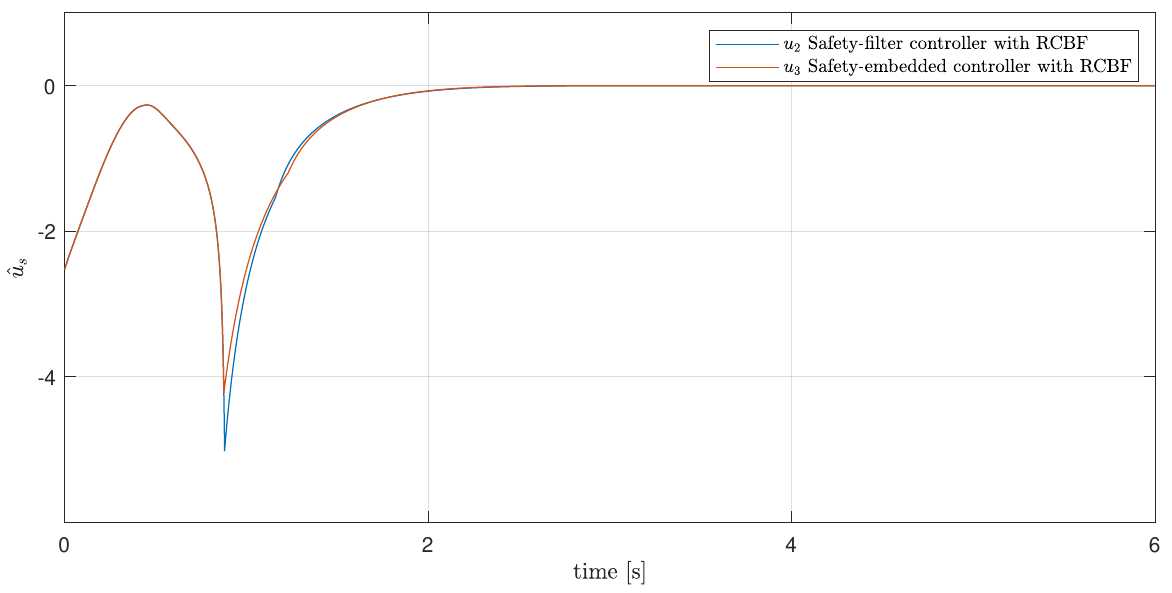}
\caption{Evolution of the controllers $u_2$ and $u_3$ during 0-6s}
\label{fig6}
\end{figure}

Comparing the controller $u_3$ we proposed with $u_1$, it can be seen that the controller $u_1$ stabilizes the system but loses the safety guarantee. This confirms the role of the compensation term in our designed RCBF, which can effectively avoid potential safety violations caused by estimation errors. The controller $u_2$ achieves the objectives of obstacle avoidance and stability in a point-wise way with the cost $V=12.18$; the controller $u_3$ achieves optimal stabilization and safety over the infinite time horizon with the cost $V=11.81$. The state trajectories corresponding to the controllers $u_2$ and $u_3$ show little distinction. Figure \ref{fig6} displays the evolution of controllers $u_2$ and $u_3$ corresponding to the phase trajectories. It can be observed that the proposed controller $u_3$ exhibits smoother variations compared to $u_2$. This is attributed to the integration of safety constraint information during the learning process, resulting in a lower cost.

\subsection{Self-triggered mechanism}
We further validate the self-triggered mechanism for implementing the adaptive safety-embedded controller with the safe set $\mathcal{D}=\{x:s(x)= -x_2^2 - x_1 +1\ge0\}$. The learning rates and initial conditions of the identifier are set as $\Gamma_{\theta} = 100\mathbb{I}_3$ and $\hat{\theta}(0) = [0,0,0]^\top$, respectively. The upper bound for the intermediate integral variable $\Omega_f$ is set as $\overline{\Omega}_f=10$. The system is initialized as $x(0) = [-3.2,-1]^\top,\hat{W}_s(0)=0.1\times \textbf{1}_{3\times 1},\Gamma(0) = 10\mathbb{I}_3$, and the learning rates are selected as $k_{c1}=0.0001,k_{c2}=0.0001,\beta=0.0001$. The self-triggered RCBF is set with $\gamma=0.8$, $\alpha_s = 6$, and the compensation term $\Xi(x)=\frac{\Vert \mathcal{L}_{\omega}s(x) \Vert^2}{1.2}$. The triggering threshold for the stability guarantee is calculated as $f_{\rm v,self}(\Vert\breve{x}_j\Vert)=\frac{\Vert \breve{x}_j\Vert}{\sqrt{10}}$, with $\chi_1=0.25$, $\chi_2 = 0.5$, and $d_v = 1$. The function $\overline{M}_j(\cdot)$ can be set as
\begin{flalign}
\overline{M}_j(\Vert \breve{e}_j\Vert)=p_1\Vert \breve{e}_j \Vert + p_2 \ln\left( 1+ \frac{p_3\Vert \breve{e}_j \Vert}{p_4\Vert \breve{x}_j\Vert + p_5} \right),
\end{flalign}
with $p_1=10$, $p_2=1$, $p_3=5$, $p_4=5$, and $p_5=10$. Hence, the triggering threshold $f_{\rm v,self}(\Vert\breve{x}_j\Vert)$ for the safety guarantee can be calculated due to \eqref{self-trigger cbf}. 

\begin{figure}[ht]
\centering
\includegraphics[
    width=0.9\textwidth,    % 宽度拉伸至文本行宽
    height=5cm,      % 保持原始高度（不缩放）
    keepaspectratio=false % 关闭等比例缩放
  ]{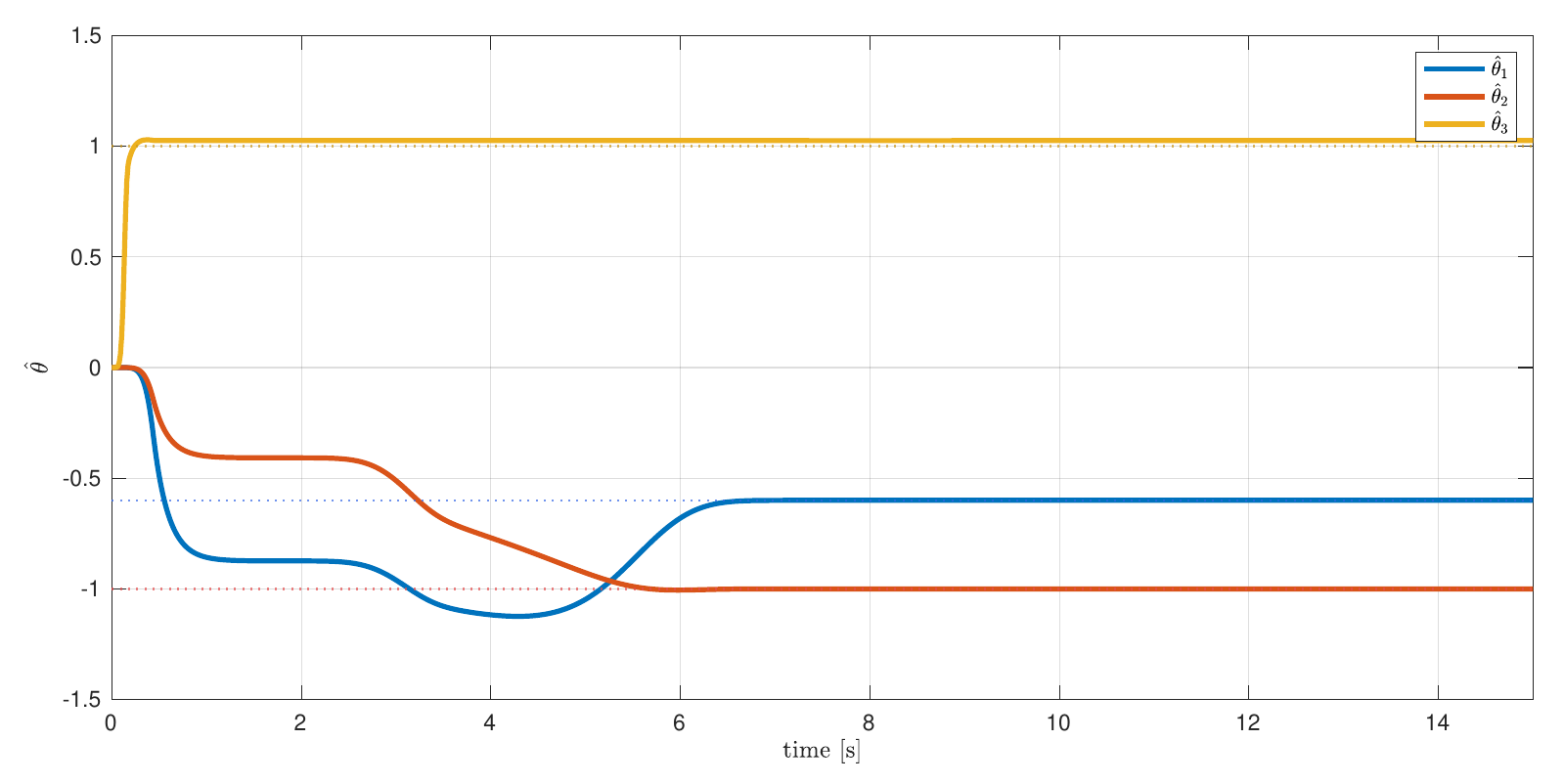}
\caption{Convergence of the estimated parameters $\hat{\theta}$}
\label{fig7}
\end{figure}

\begin{figure}[ht]
\centering
\includegraphics[
    width=0.9\textwidth,    % 宽度拉伸至文本行宽
    height=5cm,      % 保持原始高度（不缩放）
    keepaspectratio=false % 关闭等比例缩放
  ]{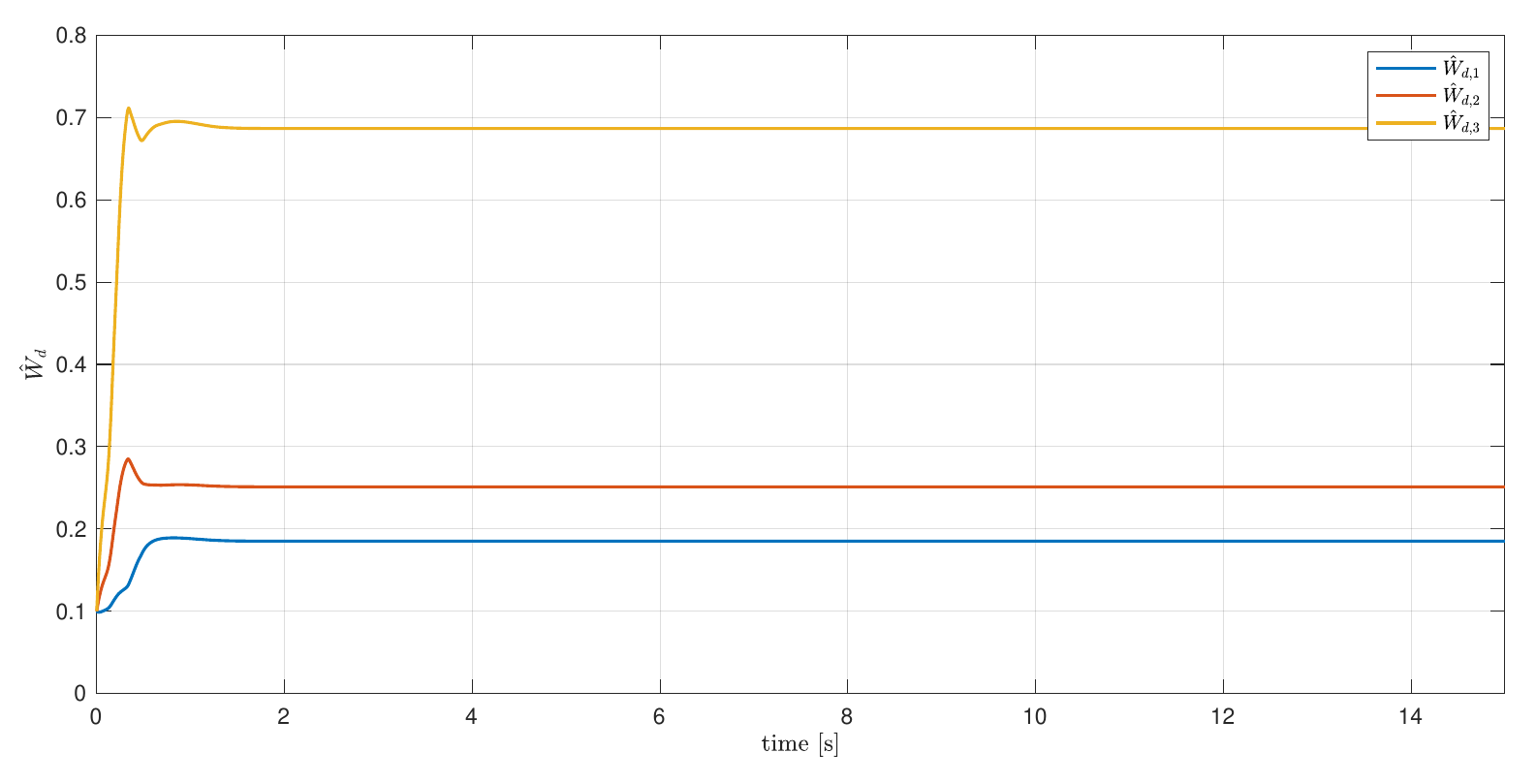}
\caption{Evolution of the approximate critic weights $\hat{W}_s$}
\label{fig8}
\end{figure}

\begin{figure}[ht]
\centering
\includegraphics[
    width=0.9\textwidth,    % 宽度拉伸至文本行宽
    height=5cm,      % 保持原始高度（不缩放）
    keepaspectratio=false % 关闭等比例缩放
  ]{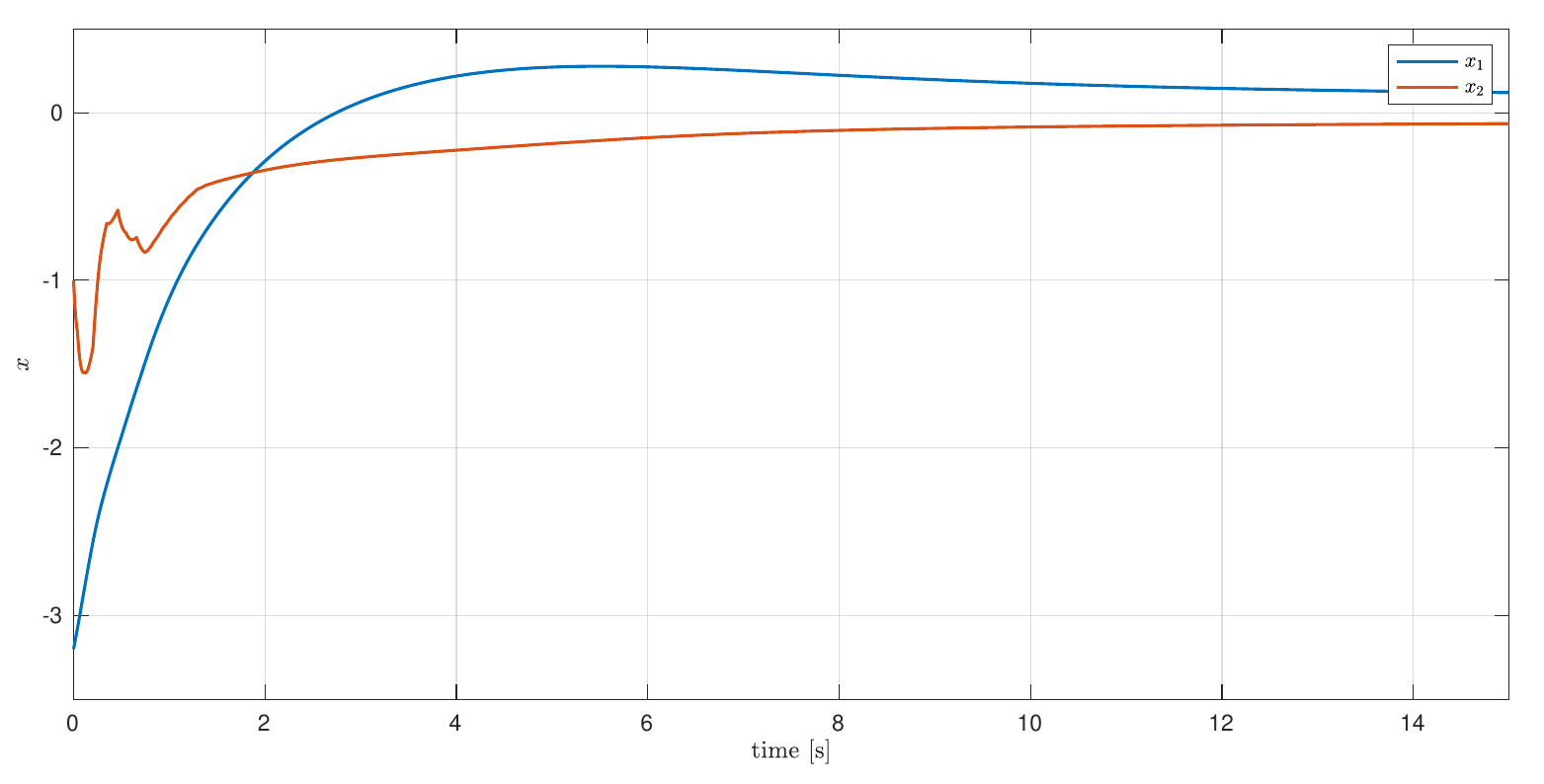}
\caption{Convergence of the true state $x$}
\label{fig9}
\end{figure}

\begin{figure}[ht]
\centering
\includegraphics[
    width=0.9\textwidth,    % 宽度拉伸至文本行宽
    height=5cm,      % 保持原始高度（不缩放）
    keepaspectratio=false % 关闭等比例缩放
  ]{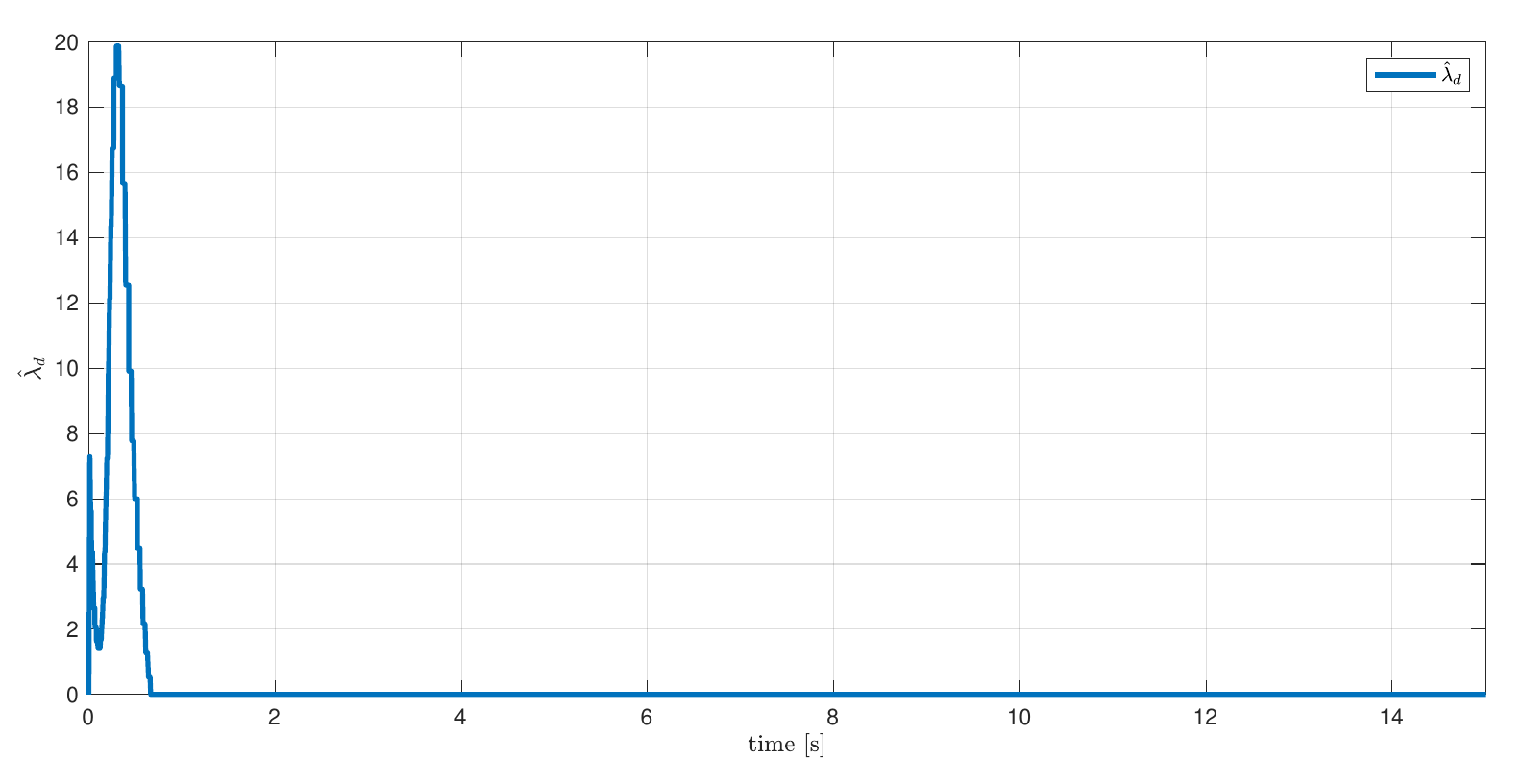}
\caption{Evolution of the estimated Lagrange multiplier $\hat{\lambda}_d$}
\label{fig10}
\end{figure}

Figure \ref{fig7} illustrates the convergence process of the estimated parameters $\hat{\theta}$ to their true values. Unlike the identification method used in the obstacle-like avoidance case, we introduce a ``refresh" mechanism that reintegrates the regressor $\Omega_f$ whenever $\Vert \Omega_f\Vert$ reaches the upper bound $\overline{\Omega}_f$. Otherwise, the estimated parameters would converge to a local optimum at t=3s. Figure \ref{fig8} shows the evolution of the estimated critic weights $\hat{W}_s$. Figure \ref{fig9} illustrates the convergence process of the true state $x$ in the time domain. Figure \ref{fig10} presents the evolution of the estimated Lagrange multiplier $\hat{\lambda}_d$, revealing that their computation process is self-triggered. By combining Figures \ref{fig8} and \ref{fig10}, it can be observed that when the Lagrange multiplier exhibits non-zero effects, they influence the transient learning process of the critic weights.

The evolution of the control signal $\hat{u}_{\lambda}$ is shown in Figure \ref{fig11} and the triggering threshold $\Vert \breve{e}_j \Vert$ in Figure \ref{fig12}. These figures demonstrate that the control input is modified only when the sampling error norm $\breve{e}_j(t)$ hits the threshold $\min\{ f_{\rm v,self}(\Vert\breve{x}_j\Vert), f_{\rm s,self}(\Vert\breve{x}_j\Vert) \}$; otherwise, it remains unchanged, thereby ensuring that the control policy is implemented efficiently. Moreover, we can see that this mechanism prevents the occurrence of the Zeno Phenomenon, where the control input would otherwise be updated infinitely often in a finite period. 

To demonstrate the effectiveness of our proposed self-triggered mechanism in reducing computational burdens, we compare it with the time-triggered approach (see Table \ref{tab:comparison}). We use the number of state samplings to represent the communication burden. Due to our time discretization step of $10^{-3}$, the time-triggered approach requires $15,000$ state samplings in total, whereas the self-triggered mechanism only needs $118$ samplings. For each sampled state, the controller update requires over $1000$ addition and multiplication operations in both approaches. Furthermore, the self-triggered mechanism introduces more than $100$ additional operations to compute the triggering condition. However, due to the significant reduction in sampling frequency, the overall computation frequency is drastically lowered. Compared to the time-triggered approach, which requires more than $10^{-7}$ computations, only a total of $154,462$ operations are needed, effectively reducing the computational and communication burden.

\begin{figure}[ht]
\centering
\includegraphics[
    width=0.9\textwidth,    % 宽度拉伸至文本行宽
    height=5cm,      % 保持原始高度（不缩放）
    keepaspectratio=false % 关闭等比例缩放
  ]{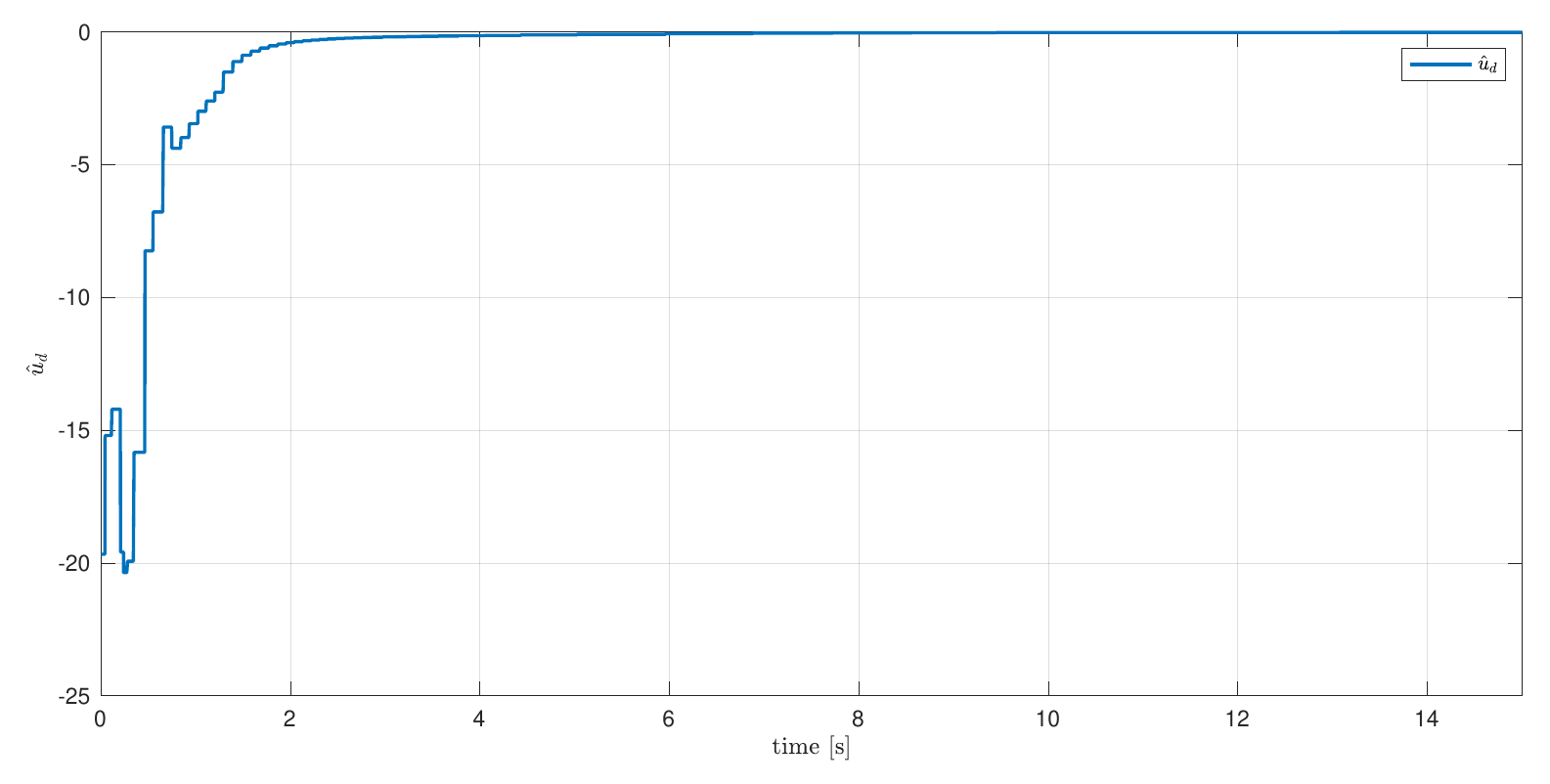}
\caption{Evolution of the self-triggered controller $\hat{u}_d$}
\label{fig11}
\end{figure}

\begin{figure}[ht]
\centering
\includegraphics[
    width=0.9\textwidth,    % 宽度拉伸至文本行宽
    height=5cm,      % 保持原始高度（不缩放）
    keepaspectratio=false % 关闭等比例缩放
  ]{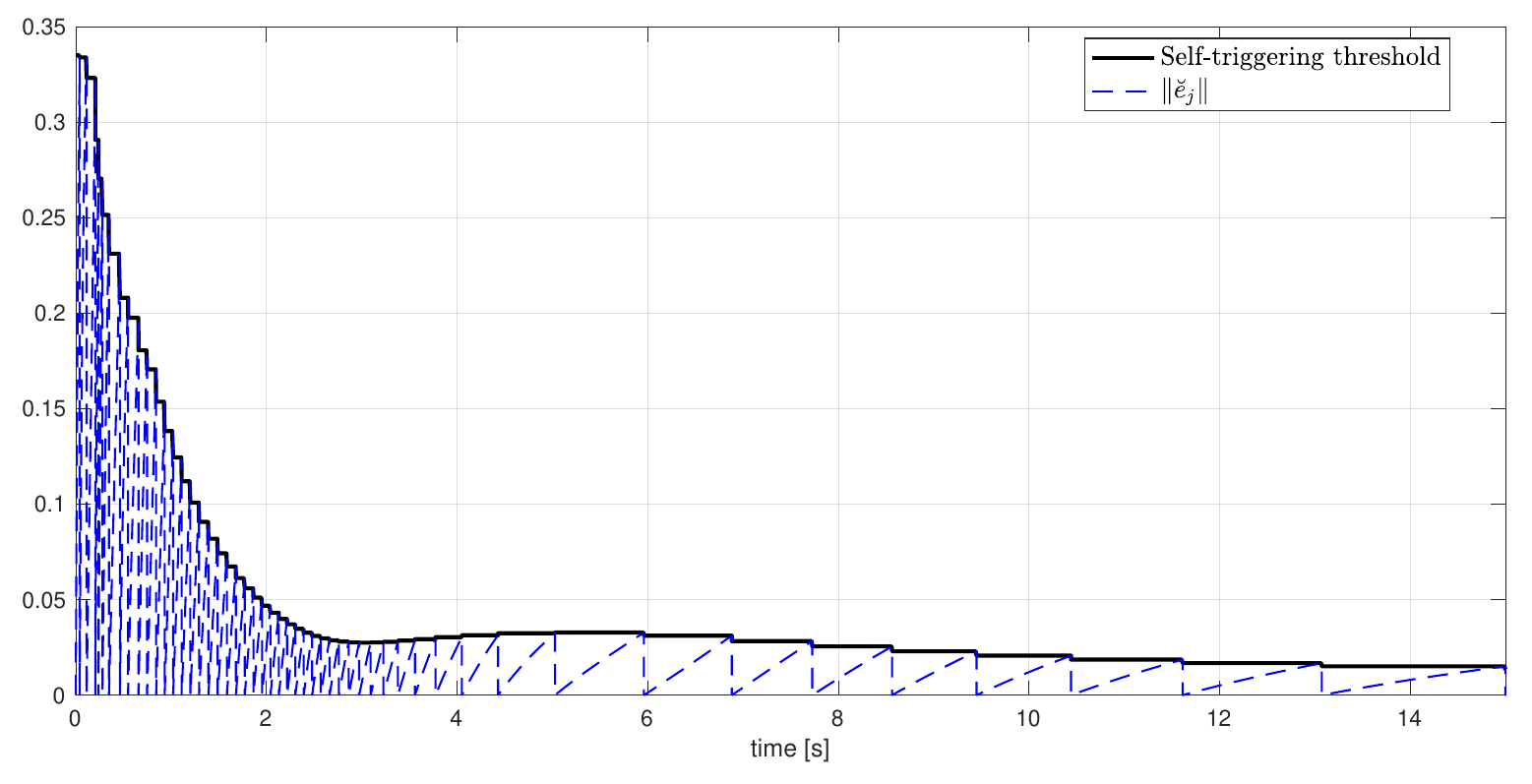}
\caption{Evolution of the self-triggering threshold and the norm of sampling error $\Vert \breve{e}_j \Vert$}
\label{fig12}
\end{figure}

\begin{table}[htbp]
\centering
\small
\caption{Comparison of communication and computational burdens: time-triggered vs.\ self-triggered mechanisms.}
\label{tab:comparison}
\setlength{\tabcolsep}{8pt}      % 列间距
\renewcommand{\arraystretch}{1.2}% 行高
\begin{tabular}{@{}l|c|>{\centering\arraybackslash}p{2.8cm}|>{\centering\arraybackslash}p{2.8cm}@{}}
\toprule
\multicolumn{2}{c|}{\multirow{1}{*}[-1.6ex]{Controller implementation methods}} & Time-triggered mechanism & Self-triggered mechanism \\
\midrule
\multirow{2}{*}{\parbox{3cm}{\centering\scriptsize Number of additions \& multiplications}}
& Controller update   & $1,146$   & $1,146$  \\ \cmidrule(l){2-4}
& Triggering condition & 0    & $163$   \\
\midrule
\multicolumn{2}{c|}{Number of sampled states} & 15,000 & 118 \\
\midrule
\multicolumn{2}{c|}{Total number of computations} & $> 10^7$ & $ 154,462$ \\
\bottomrule
\end{tabular}
\end{table}

We further explore the control effects of two types of control policies as in Figure \ref{fig13}: first, the self-triggered safety-embedded control policy with baseline CBF-based safety constraints $\hat{u}_{\rm d,1}$; second, the self-triggered safety-embedded control policy with RCBF-based safety constraints $\hat{u}_{\rm d,2}$. From Figure \ref{fig13}, it can be observed that the introduction of the compensation term in RCBF effectively avoids potential safety violations caused by estimation errors. Meanwhile, the system's safety can still be guaranteed under the self-triggered mechanism.

\begin{figure}[ht]
\centering
\includegraphics[width=.7\columnwidth]{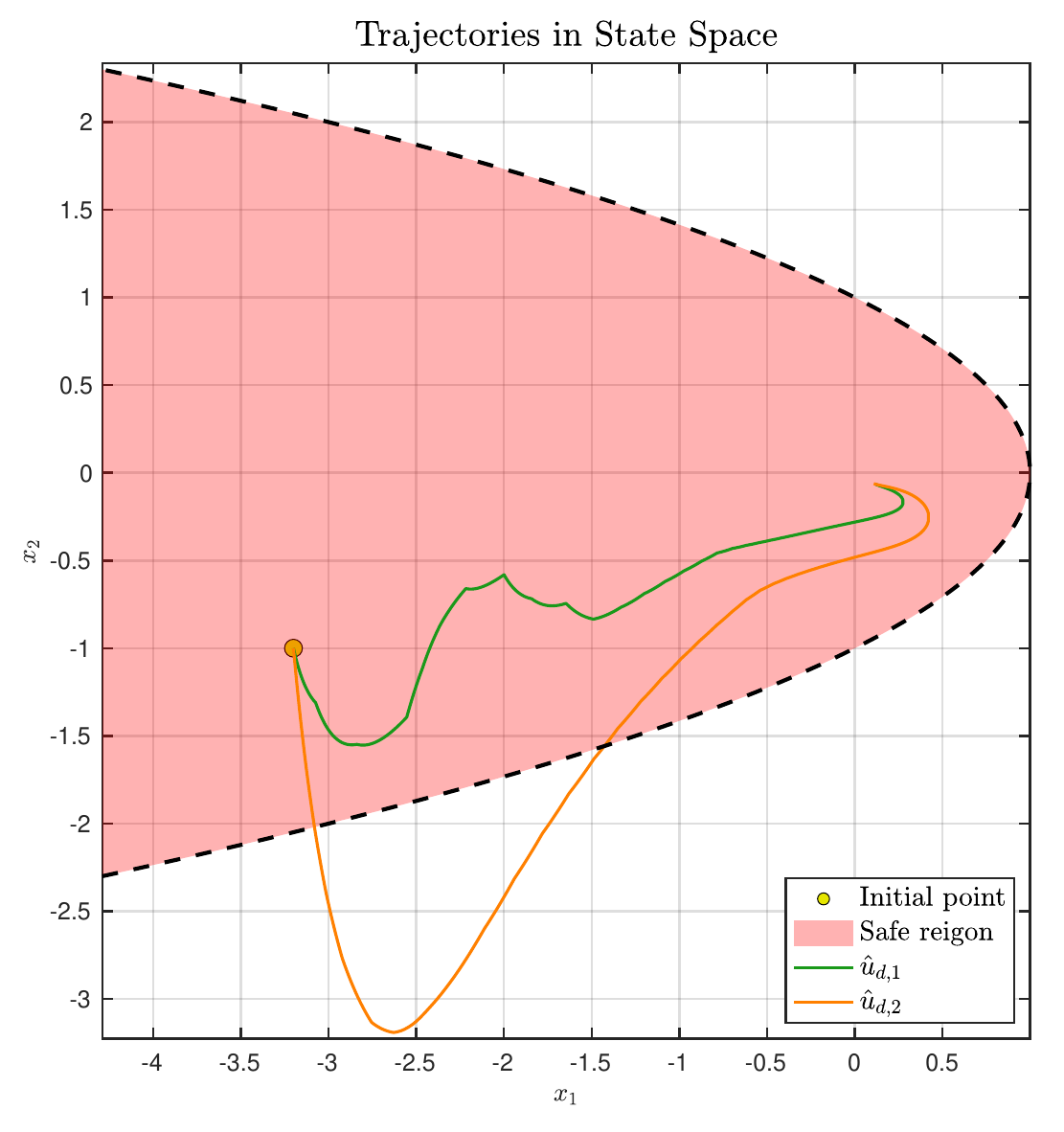}
\caption{The state trajectories under two control policies: the self-triggered safety-embedded controller with the baseline CBF $\hat{u}_{\rm d,1}$; the proposed self-triggered safety-embedded controller with RCBF $\hat{u}_{\rm d,2}$.}
\label{fig13}
\end{figure}

\section{Conclusion and future work}
In this paper, we focus on the learning-based optimal control design for state-constrained systems with unknown drift dynamics. To ensure the safety of such systems, we develop an identifier for estimating uncertain parameters and construct an RCBF to encode the safety guarantee for uncertain systems. Then, the RCBF-based safety constraint is embedded in the value function via a Lagrange multiplier to guarantee optimal stabilization and safety simultaneously, and we develop a safety-embedded critic learning framework to solve the RCBF-constrained HJB equation online. Additionally, a self-triggered mechanism is proposed to implement the learning-based controller for reducing the computational and communication overhead. Correspondingly, self-triggered RCBF-constrained HJB equations are derived and can be solved similarly through the safety-embedded critic learning. In the future, this safety-embedded adaptive optimal control method will be extended to systems with incomplete state information, where an observer-controller synthesis will be considered.

\appendix
\section{Proof for triggering interval \eqref{safe period}} \label{appendix2}
For $t\in[t_j,t_{j+1})$, the norm of the derivative of $x$ is bounded by
\begin{flalign}\label{uu}
\Vert \dot{x} \Vert &\leq \Vert \zeta(x,\theta) \Vert + \Vert \rho(x) \Vert \Vert u(\breve{x}_j) \Vert \notag\\
&\leq d_{\zeta} \Vert x \Vert + \Vert \rho(x)\Vert \left( \Vert u(x) - u(\breve{x}_j) \Vert + \Vert u(x)\Vert \right)\notag\\
&\leq d_{\zeta} \Vert x \Vert + \Vert \rho(x) \Vert\left( d_v \Vert \breve{e}_j(t) \Vert + \Vert u(x) \Vert \right), 
\end{flalign}
where $d_{\zeta}$ and $d_v$ are Lipschitz constants of $\zeta(\cdot,\theta)$ and $u(\cdot)$. Suppose that the controller $u(\cdot)$ is bounded, we can rewrite \eqref{uu} as
\begin{flalign}
& \Vert \dot{\breve{e}}_j(t) \Vert = \Vert \dot{x} \Vert \leq l_1 \Vert x \Vert +  l_2 \Vert \breve{e}_j(t) \Vert + l_3, 
\end{flalign}
where $l_1 = d_{\zeta}$, $l_2 = d_v \overline{\Vert \rho(x)\Vert}$ and $l_3 = \overline{\Vert \rho(x)\Vert} u_{\max}$. Further, we can obtain that 
\begin{flalign}
& \frac{d\Vert \breve{e}_j(t) \Vert}{dt} \leq \Vert \dot{\breve{e}}_j(t) \Vert \leq l_1 \Vert x_j \Vert + (l_1+l_2)\Vert \breve{e}_j(t) \Vert + l_3. 
\end{flalign}
Due to comparing lemma, we can obtain that
\begin{flalign}
& \Vert \breve{e}_j(t) \Vert \leq \frac{l_1\Vert \breve{x}_j \Vert+l_3}{l_1+l_2}\left( {\rm exp}\{(l_1+l_2)(t-t_j)\}-1 \right). 
\end{flalign}
If the triggering threshold $e_{j,b}$ is violated, i.e., $t=t_{j+1}$, we have
\begin{flalign}\label{asd}
& \frac{l_1\Vert \breve{x}_j \Vert+l_3}{l_1+l_2}\left( {\rm exp}\{(l_1+l_2)T_j\}-1 \right)\ge \Vert \breve{e}(t_{j+1}) \Vert = e_{j,b}. 
\end{flalign}
Following (\ref{asd}), we have
\begin{flalign}\label{bound for Tj}
& T_j > \frac{1}{l_1+l_2}\ln\left( 1+ \frac{l_1+l_2}{l_1\Vert \breve{x}_j\Vert+l_3}e_{j,b} \right)\triangleq T_{j,b}(\breve{x}_j). 
\end{flalign}

%% For citations use: 
%%       \cite{<label>} ==> [1]

%% If you have bib database file and want bibtex to generate the
%% bibitems, please use
%%
\bibliographystyle{elsarticle-num} 
\bibliography{arXiv}
%\input{arXiv.bbl}

%% else use the following coding to input the bibitems directly in the
%% TeX file.

%% Refer following link for more details about bibliography and citations.
%% https://en.wikibooks.org/wiki/LaTeX/Bibliography_Management

\end{document}